\theoremstyle{thmstyleone}%
\newtheorem{theorem}{Theorem}
\newtheorem{proposition}[theorem]{Proposition}%
\theoremstyle{thmstyletwo}%
\theoremstyle{thmstylethree}%
\newtheorem{definition}{Definition}%
\newcommand{\ve}[1]{{\boldsymbol{#1}}} 
\newcommand{\pkg}[1]{{\fontseries{b}\selectfont #1}}
\let\proglang=\textsf
\let\code=\texttt
\newcommand{\inla}{\pkg{R-INLA}}
\begin{document}

\title[A flexible Bayesian tool for CoDa mixed models in INLA: LNDM]{A flexible Bayesian tool for CoDa mixed models: logistic-normal distribution with Dirichlet covariance}


\author*[1]{\fnm{Joaqu\'in} \sur{Mart\'inez-Minaya}}\email{jmarmin@eio.upv.es}

\author[2]{\fnm{Haavard} \sur{Rue}}\email{haavard.rue@kaust.edu.sa}

\affil*[1]{\orgdiv{Department of Applied Statistics, Operations Research and Quality}, \orgname{Universitat Politècnica de València}, \orgaddress{\city{València}, \country{Spain}}}

\affil[2]{\orgdiv{Computer,  Electrical  and  Mathematical  Science  and  Engineering  Division}, \orgname{King Abdullah University of Science and Technology (KAUST)}, \orgaddress{\city{Thuwal},\country{Saudi Arabia}}}


\abstract{Compositional Data Analysis (CoDa) has gained popularity in recent years. This type of data consists of values from disjoint categories that sum up to a constant. Both Dirichlet regression and logistic-normal regression have become popular as CoDa analysis methods. However, fitting this kind of multivariate models presents challenges, especially when structured random effects are included in the model, such as temporal or spatial effects.

To overcome these challenges, we propose the logistic-normal Dirichlet Model (LNDM). We seamlessly incorporate this approach into the \textbf{R-INLA} package, facilitating model fitting and model prediction within the framework of Latent Gaussian Models (LGMs). Moreover, we explore metrics like Deviance Information Criteria (DIC), Watanabe Akaike information criterion (WAIC), and cross-validation measure conditional predictive ordinate (CPO) for model selection in \textbf{R-INLA} for CoDa.

Illustrating LNDM through a simple simulated example and with an ecological case study on \textit{Arabidopsis thaliana} in the Iberian Peninsula, we underscore its potential as an effective tool for managing CoDa and large CoDa databases.}

\keywords{CoDa, Dirichlet, INLA, Spatial}



\maketitle

\section{Introduction}

Compositional Data analysis is an increasingly popular topic for understanding processes that consist in values that correspond to disjoint categories, the sum of which is a constant. Those values are usually proportions or percentages, and in such cases the constant is 1 or 100. The data generated from these processes are widely known as Compositional Data (CoDa). For the sake of simplicity and without loss of generality, from now on, we assume the constant to be 1. \citet{connor1969} proposed Dirichlet regression to deal with CoDa. Since then, several studies have been conducted using this technique, and most of them have proved that it is a very valuable tool for modelling CoDa, see for example \citet{Hijazi2009} and \citet{pirzamanbein2020}.

There are other approaches to CoDa analysis. \citet{aitchison1986} presented a unified theory, developing a range of methods based on the idea that ``information in compositional vectors is concerned with relative, not absolute magnitudes''. With this statement, the notion of ratios among proportions emerged and the concept of log-ratios arose as the preferred method for dealing with CoDa. Modelling CoDa using logistic-normal gained ground, and the bases of CoDa were established. 



A vast body of literature exists on the subject of applying these methods using both Dirichlet regression and logistic-normal regression in different fields, including Ecology \citep{kobal2017, douma2019}, Geology \citep{buccianti2014, engle2014}, Genomics \citep{tsilimigras2016,shi2016,washburne2017,creus2022}, Environmental Sciences \citep{aguilera2021, mota2022} or Medicine \citep{dumuid2018, fairclough2018}. 

Nevertheless, one of the biggest problems encountered when dealing with CoDa models is performing inference. To do so, different approaches have been proposed; in particular, many R-packages have been implemented not only from the frequentist perspective \citep{cribari2010,templ2011,maier2014}, but also from the Bayesian paradigm. R-packages such as \pkg{BayesX} \citep{klein2015}, \pkg{Stan} \citep{sennhenn2018}, \pkg{BUGS} \citep{VanderMerwe2018} and \pkg{R-JAGS} \citep{plummer2016} have tools for dealing with CoDa. These Bayesian packages are mainly based on Markov chain Monte Carlo (MCMC) methods, which construct a Markov chain whose stationary distribution converges to the posterior distribution. However, the computational cost of MCMC can be high. Moreover, the integrated nested Laplace approximation (INLA) methodology \citep{rue2009inla}, which is mainly intended for approximating the posterior distribution using the Laplace integration method, has become an alternative to MCMC guaranteeing a higher computational speed for Latent Gaussian Models (LGMs). With the incorporation of new techniques from Bayesian variational inference \citep{van2021, van2023new} and the optimisation of the computation, which improves its parallel performance \citep{gaedke2023}, a new era is emerging in the INLA software. Hence, incorporating a tool for dealing with CoDa would be a convenient way to tackle the large CoDa databases sometimes encountered.

Nonetheless, in \textbf{R-INLA}, it is still a challenge to fit models when we deal with a multivariate likelihood such as the ones defined in $\mathbb{S}^D$. There are some approximations for the Dirichlet likelihood that involve converting the original Dirichlet observations into Gaussian pseudo-observations conditioned to the linear predictor \citep{martinez-minaya2023} or just converting a CoDa multivariate response into coordinates using the isometric log-ratio transformation \citep{mota2022} and fitting them in an independent way. However, there is no unified way to fit these models inside \textbf{R-INLA} and take advantage of all its facilities.

In this paper we present the logistic-normal Dirichlet model (LNDM), which mainly uses logistic-normal distribution with Dirichlet covariance through the additive log-ratio transformation as likelihood. This allows us  to integrate it within the \textbf{R-INLA} package in a very simple way. Thus, we benefit from all the other features of \textbf{R-INLA} for model fitting, model selection and predictions within the framework of LGMs. Additionally, we present how measures such the Deviance Information Criteria \citep[DIC]{spiegelhalter2002}, the Watanabe Akaike information criterion \citep[WAIC]{watanabe2010, gelman2014}, or the cross-validation measure conditional predictive ordinate (CPO) for evaluating the predictive capacity \citep{pettit1990, roos2011} are computed in \textbf{R-INLA} for dealing with CoDa. To show how the method works, a real example in the field of Ecology was implemented, in which we conducted a spatial analysis of the plant \textit{Arabidopsis thaliana} on the Iberian Peninsula.

The paper is then divided into 6 more sections. Section \ref{sec::coda} introduces CoDa, the distributions that can be defined in $\mathbb{S}^D$, and their equivalence. Section \ref{sec::INLA} presents some fundamentals of the INLA methodology. Section \ref{sec::LNDM} is devoted to introducing the logistic-normal regression with Dirichlet covariance. In Section \ref{sec::spatial}, we introduce spatial models as well as model selection measures in CoDa. In Section \ref{sec::arabidopsis}, we provide a real application of this method and, finally, Section \ref{sec::conclusions} concludes and discusses future avenues of research.

\section{CoDa background} \label{sec::coda}
This section is devoted to introducing some preliminary concepts for a better understanding of CoDa. In particular, we present some basic and formal definitions of the two main distributions employed when we deal with CoDa.

\subsection{CoDa. Definitions}
Let $\ve{y}_{D \times 1}$ be a vector that satisfies $\sum_{d = 1}^D y_d = 1$, and $0 < y_d < 1$, $d = 1,\ldots, D$. This vector is called a composition, and it pertains to the simplex sample space. The simplex of dimension $D$, denoted by $\mathbb{S}^D$, is defined as
\begin{equation}
    \mathbb{S}^D = \left \{\boldsymbol{y} \in \mathbb{R}^D  \mid 0 < y_d < 1; \ \ \sum_{d = 1}^D y_d = 1 \right \} \label{simplex}
\end{equation}
As in the ordinary real Euclidean space, there is a geometry defined in $\mathbb{S}^D$. It does not follow the usual Euclidean geometry, and it was introduced by \citet{pawlowsky2001} and \citet{egozcue2003}. It is called Aitchison geometry. The definitions of perturbation and powering are sufficient to obtain a vector space of compositions and the usual properties such as commutativity, associativity and distributivity hold. With the definition of the Aitchison inner product, the Aitchison norm and the Aitchison distance, a Euclidean linear vector space is obtained \citep{pawlowsky2001}.

Following the fundamentals proposed by \citet{aitchison1986}, log-ratios play an important role in CoDa analysis. They can be constructed in different ways, including centered log-ratio, isometric log-ratio or additive log-ratio, among others \citep{egozcue2012}. In this work, we focus on the well-known additive log-ratio transformation because of its straightforward interpretation \citep{greenacre2022}, and due to its being a one-to-one mapping from $\mathbb{S}^D$ to $\mathbb{R}^{D-1}$. It is defined as:
\begin{equation}
\boldsymbol{z}_{(D-1)\times 1}= alr(\boldsymbol{y}) := \left[\log \left(\frac{y_1}{y_D}\right), \ldots, \log \left(\frac{y_{D-1}}{y_{D}}\right)\right] \,,
\end{equation}
where $D$ is the reference category. In \citet{greenacre2022}, the authors depicted some criteria to select the reference category. They recommended choosing the one whose logarithm has low variance as a reference, and avoiding taking a reference with low relative abundances across samples. The new variables generated are called $alr$-coordinates. The inverse $alr$, also called $alr^{-1}$ is
\begin{equation}
    alr^{-1}(\boldsymbol{z}) =  \left[\frac{\exp{(z_1)}}{1 + \sum_{d=1}^{D-1} \exp{(z_d)}}, \ldots, \frac{\exp{(z_{D-1})}}{1+ \sum_{d=1}^{D-1} \exp{(z_d)}}, \frac{1}{1+ \sum_{d=1}^{D-1} \exp{(z_d)}}  \right] \,. \nonumber
\end{equation}
In addition to Aitchison geometry, several probability distributions have also been characterised in $\mathbb{S}^D$\citep{mateu2003}, although here we focus on the Normal distribution on the simplex or logistic-normal distribution, and the Dirichlet distribution.

\subsection{Logistic-Normal distribution and Dirichlet distribution}
Logistic-normal distribution was defined by \citet{aitchison1980} and it was studied in depth in \citet{aitchison1986}. A $D$ random vector $\boldsymbol{y}$ is said to have a logistic-normal distribution $\mathcal{LN}(\boldsymbol{\mu}, \boldsymbol{\Sigma})$, or alternatively a normal distribution on $\mathbb{S}^D$, if any of its vector of log-ratio coordinates has a joint $(D-1)$-variate normal distribution. This definition can be adapted straight to a CoDa response using $alr$-coordinates, as:
\begin{equation}
\boldsymbol{y} \mid \boldsymbol{\mu}, \boldsymbol{\Sigma} \sim \mathcal{LN}(\boldsymbol{\mu}, \boldsymbol{\Sigma}) \Longleftrightarrow alr(\boldsymbol{y}) \mid \boldsymbol{\mu}, \boldsymbol{\Sigma} \sim \mathcal{N}(\boldsymbol{\mu}, \boldsymbol{\Sigma})\,,
\end{equation}
$\boldsymbol{\mu}$ being a $D-1$ dimensional vector and $\boldsymbol{\Sigma}$ a $(D-1) \times (D-1)$ covariance matrix.
Alternatively, the Dirichlet distribution was introduced in \citet{connor1969}, and it is the generalisation of the widely known beta distribution. A $D$ random vector $\boldsymbol{y}$ is said to have a Dirichlet distribution $\mathcal{D}(\ve{\alpha})$, if it has the following probability density:
\begin{equation}
	p(\ve{y} \mid \ve{\alpha})= \frac{1}{\text{B}(\ve{\alpha})} \prod_{d=1}^D y_d^{\alpha_d -1} \,,
    \label{dirichlet}
\end{equation}
$\ve{\alpha} = (\alpha_1, \ldots, \alpha_D)$ being the vector of shape parameters for each category, $\alpha_D>0$ $\forall d$, $y_d \in (0,1)$, $\sum_{d=1}^D y_d=1$, and $\text{B}(\ve{\alpha})$ the multinomial Beta function, which serves as the normalising constant. The multinomial Beta function is defined as $\text{B}(\ve{\alpha})=\prod_{d=1}^D \Gamma(\alpha_d)/ \Gamma(\sum_{d=1}^D \alpha_d)$. The sum of all $\alpha$'s, $\alpha_0=\sum_{d=1}^D \alpha_c$, is usually interpreted as a precision parameter. The Beta distribution is the particular case when $D=2$. In addition, each variable is marginally Beta distributed with $\alpha=\alpha_d$ and $\beta=\alpha_0-\alpha_d$. If $\ve{y} \sim \mathcal{D}(\ve{\alpha})$, the expected values are $\text{E}(y_d)=\alpha_d/\alpha_0$, the variances are $\text{Var}(y_d)=[\alpha_c(\alpha_0 - \alpha_d)]/[\alpha_0^2(\alpha_0 + 1)]$ and the covariances are $\text{Cov}(y_d, y_{d'})=-\alpha_d \alpha_{d'}/[\alpha_0^2(\alpha_0 + 1)]$.


\subsection{Relation between the two distributions}
As pointed out in \citet{aitchison1986} (pp. 126-129), the logistic-normal and the Dirichlet distribution are separate in the sense that they are never exactly equal for any choice of parameters. However, through the Kullback-Leibler divergence (KL), which measures by how much the approximation $q$ misses the target $p$, the Dirichlet distribution can be approached with the logistic-normal distribution. The solution to the minimisation of the KL:
\begin{equation}
  K(p, q) = \int_{\mathcal{S}^D} p(\boldsymbol{y} \mid \boldsymbol{\alpha}) \log \left(\frac{p(\boldsymbol{y} \mid \boldsymbol{\alpha})}{q(\boldsymbol{y} \mid \boldsymbol{\mu}, \boldsymbol{\Sigma})} \right) d \boldsymbol{y}\,, 
\end{equation}
where $p(\boldsymbol{y} \mid \boldsymbol{\alpha})$ represents the density function of the Dirichlet, and $q(\boldsymbol{y} \mid \boldsymbol{\mu}, \boldsymbol{\Sigma})$, the logistic-normal density function, is minimised by:
\begin{equation}
  \begin{array}{rcl}
  \boldsymbol{\mu} & = & \boldsymbol{E}\left[\log \left(\frac{y_{1}}{y_D}\right), \ldots, \log\left(\frac{y_{D-1}}{y_D}\right) \right] = \boldsymbol{E}\left[alr(\boldsymbol{y}) \right]\,, \\ \\
  \boldsymbol{\Sigma} &= & \boldsymbol{Var}\left[\log\left(\frac{y_{1}}{y_D}\right), \ldots, \log\left(\frac{y_{D-1}}{y_D}\right) \right] = \boldsymbol{Var}\left[alr(\boldsymbol{y}) \right] \,, \label{eq:approachln-dir}
  \end{array}
\end{equation}
and the solution can be written in terms of the digamma $\phi$ and trigamma $\phi'$ functions as:
\begin{eqnarray}
\label{eq:equivalenceLN-DIR} 
\mu_d & = & \phi(\alpha_d) - \phi(\alpha_D)\,, \ d = 1, \ldots, D-1, \nonumber\\
\Sigma_{dd} & = & \phi'(\alpha_d) + \phi'(\alpha_D)\,, \ d = 1, \ldots, D-1 \\
\Sigma_{dk} & = & \phi'(\alpha_D)\,, d \neq k \, \nonumber  
\end{eqnarray}
This approach plays an important role in this paper, as it constitutes the basis for defining logistic normal regression with Dirichlet covariance. But first we introduce the model framework in which this likelihood is included, that is, Latent Gaussian Models \citep[LGMs][]{rue2009inla}.

\section{LGMs and INLA} \label{sec::INLA}
The popularity of INLA lies in the fact that it allows fast approximate inference for LGMs. Furthermore, the INLA software is experiencing a new era, facilitated by the integration of novel techniques from Bayesian variational inference \citep{van2021, van2023new} and enhanced computation optimization, leading to improved parallel performance \citep{gaedke2023}. This section is devoted to briefly introducing the structure of LGMs and how INLA makes inference and prediction with the new advances in INLA.

\subsection{LGMs}\label{subsec:LGM}
In \citet{van2023new} a new formulation of INLA is presented. So, we follow it to introduce the notions of INLA. LGMs can be seen as three-stage hierarchical Bayesian models in which observations $\ve{y}_{N \times 1}$ can be assumed to be conditionally independent given a latent Gaussian random field $\ve{\mathcal{X}}$ and hyperparameters $\ve{\theta}_1$
\begin{equation}
    \ve{y} \mid  \ve{\mathcal{X}}, \ve{\theta}_1 \sim \prod_{n=1}^N p(y_n \mid \ve{\mathcal{X}},\ve{\theta}_1)\,.
\end{equation}
The versatility of the model class is related to the specification of the latent Gaussian field:
\begin{equation}
    \ve{\mathcal{X}} \mid \ve{\theta}_2 \sim \mathcal{N}(\ve{0}, \ve{Q}^{-1}(\ve{\theta}_2))
\end{equation}
which includes all the latent (non-observable) components of interest, such as fixed effects and random terms, describing the process underlying the data. The hyperparameters $\ve{\theta}=\{\ve{\theta}_1, \ve{\theta}_2\}$ control the latent Gaussian field and/or the likelihood for the data.

Additionally, the LGMs are a class generalising the large number of related variants of additive and generalised models. If $\boldsymbol{\eta}_{N \times 1}$ is a column vector representing the linear predictor, then different effects can be added to it:
\begin{equation}
\label{eq::LGM}
  \ve{\eta}_{N \times 1} = \boldsymbol{X} \boldsymbol{\beta}  + \sum_{l = 1}^L f_l(\ve{u}_l) \, 
\end{equation}
where $\boldsymbol{X}$ is the design matrix for the fixed part (including the first column of 1s if intercepts are added to the model), and $\boldsymbol{\beta}_{(M + 1) \times 1}$ is a column vector for the linear effects of $\ve{X}$ on $\ve{\eta}$. $\{\ve{f}\}$ are unknown functions of $\ve{U}$. This formulation can be seen as any model where each one of the $f^l(.)$ terms can be written in matrix form as $\boldsymbol{A}_l\ve{u}_l$. So, expression \eqref{eq::LGM} can be rewritten as $\ve{\eta} = \boldsymbol{A} \ve{\mathcal{X}}$, with $\boldsymbol{A}$ a sparse design matrix that links the linear predictors to the latent field. 

When we do inference, the aim is to estimate $\ve{\mathcal{X}}_{(M + 1 + L) \times 1} = \{\ve{\beta}, \ve{f}\}$, which represents the set of unobserved latent variables (latent field). If a Gaussian prior is assumed for $\ve{\beta}$ and $\ve{f}$, the joint prior distribution of $\ve{\mathcal{X}}$ is Gaussian. This yields the latent field $\ve{\mathcal{X}}$ in the hierarchical LGM formulation. The vector of hyperparameters $\ve{\theta}$ contain the non-Gaussian parameters of the likelihood and the model components. These parameters commonly include variance, scale or correlation parameters.

In most cases, the latent field in addition to be Gaussian, is also a Gaussian Markov random field \citep[GMRF,][]{rue2005}. A GMRF is a multivariate Gaussian random variable with additional conditional independence properties: $x_j$ and $x_j'$ are conditionally independent given the remaining elements if and only if the $(i,j)$ entry of the precision matrix is $0$. Implementation of INLA method use this property to speed up computation.

\subsection{INLA}
The main idea of the INLA approach is to approximate the posteriors of interest: the marginal posteriors for the latent field, $p(\mathcal{X}_m \mid \ve{y})$, and the marginal posteriors for the hyperparameters, $p(\theta_k \mid \ve{y})$. With the modern formulation \citep{van2023new}, the main enhancement is that the latent field is not augmented with the `noisy' linear predictors. Then, the joint density of the latent field, hyperparameters and the data is derived as:
\begin{equation}
    p(\boldsymbol{\mathcal{X}}, \boldsymbol{\theta} \mid \boldsymbol{y}) \propto p(\ve{\theta}) p(\ve{\mathcal{X}} \mid \ve{\theta}) \prod_{n = 1}^N p(y_n \mid (\boldsymbol{A} \ve{\mathcal{X}})_n, \boldsymbol{\theta})\,.
\end{equation}
Thus, the initial step in approaching the posterior distributions involves determining the mode and the Hessian at the mode of $\tilde{p}(\ve{\theta} \mid \ve{y})$:
\begin{equation}
    \tilde{p}(\ve{\theta} \mid \ve{y}) \propto  \frac{p(\boldsymbol{\mathcal{X}}, \ve{\theta} \mid \ve{y})}{p_{G}(\boldsymbol{\mathcal{X}} \mid \ve{\theta}, \ve{y})} \bigg{|}_{\ve{\mathcal{X}} = \boldsymbol{\mu(\theta)}}\,.
\end{equation}
being $p_{G}\left(\boldsymbol{\mathcal{X}} \mid \ve{\theta}, \ve{y}\right)$ the Gaussian approximation to $p(\boldsymbol{\mathcal{X}} \mid \ve{\theta}, \ve{y})$ computed as depicted in \citet{van2023new}:
\begin{equation}
    \ve{\mathcal{X}} \mid \ve{\theta}, \ve{y} \sim \mathcal{N}(\ve{\mu(\theta)}, \ve{Q}_{\ve{\mathcal{X}}}^{-1}(\ve{\theta}))\,.
    \label{eq::post_lg}
\end{equation}
The subsequent step involves obtaining the conditional posterior distributions of the elements in $\ve{\mathcal{X}}$. To achieve this, it suffices to perform integration $\ve{\theta}$ out from \eqref{eq::post_lg} using $T$ integration points $\theta_t$ and area weights $\delta_t$ defined by some numerical integration scheme:
\begin{equation}
    \tilde{p}(\mathcal{X}_m \mid \ve{y}) = \int p_{G}(\mathcal{X}_m \mid \ve{\theta}, \ve{y}) d \ve{\theta} \approx \sum_{t = 1}^T p_{G}(\mathcal{X}_m \mid \theta_t, \ve{y}) \tilde{p}(\theta_t \mid \ve{y})\delta_t \,.
\end{equation}
Finally, the recent proposed Variational Bayes correction to Gaussian means by \citet{van2021} is used to efficiently calculate an improved mean for the marginal posterior of the latent field. All this methodology can be used through \proglang{R} with the \inla{} package. For more details about \inla{} we refer the reader to \citet{blangiardo2015,zuur2017beginner,faraway2018,krainski2018,moraga2019,gomez-rubio2020, van2023new}, where practical examples and code guidelines are provided.


\section{INLA for fitting logistic-normal regression with Dirichlet covariance} \label{sec::LNDM}

In this part, we present our approximation for fitting CoDa.

\subsection{Bayesian logistic-normal regression with Dirichlet covariance}
To define the likelihood we will need the logistic-normal distribution and the structure of the variance-covariance matrix presented in \eqref{eq:equivalenceLN-DIR}. 
\begin{definition} \label{def:LGD}
$\boldsymbol{y} \in \mathbb{S}^D$ follows a logistic-normal distribution with Dirichlet covariance $\mathcal{LND}(\boldsymbol{\mu}, \boldsymbol{\Sigma})$ if and only if $alr(\boldsymbol{y}) \sim \mathcal{N}(\boldsymbol{\mu}, \boldsymbol{\Sigma})$, and:
\begin{equation}
  \begin{array}{rcl}
\Sigma_{dd} & = & \sigma_d^2 + \gamma \,, \ d = 1, \ldots, D-1 \\
\Sigma_{dk} & = & \gamma\,, d \neq k \, \nonumber
    \end{array}
\end{equation}
where $\sigma_d^2 + \gamma$ represents the variance of each log-ratio and $\gamma$ is the covariance between log-ratios. 
\end{definition}
From now on we will refer to $\mathcal{ND}(\ve{\mu},\ve{\Sigma})$ as the multivariate normal with Dirichlet covariance structure, as depicted in Definition \ref{def:LGD}.
Let $\ve{y}$ be a multivariate random variable such as $\ve{y} \sim \mathcal{LND}(\boldsymbol{\mu}, \boldsymbol{\Sigma})$, which by definition is equivalent to $alr(\ve{y} ) \sim \mathcal{ND}(\boldsymbol{\mu}, \boldsymbol{\Sigma})$. Because of its easy interpretability in terms of log-ratios with the reference category, we focus on modelling $alr(\ve{y})$ as a $\mathcal{ND}(\boldsymbol{\mu}, \boldsymbol{\Sigma})$. 

Let $\ve{\mu}^{(d)}_{N \times 1}$, a column vector representing the linear predictor for the $n$th observation in the $d$th $alr$-coordinate, and $\boldsymbol{X}^{(d)}$ with dimension $N \times (M^{(d)} + 1), d = 1, \ldots, D-1$, the design matrix, which can be different for each $d$ $alr$-coordinate; in other words, each $alr$-coordinate can be explained by different covariates. Let $\boldsymbol{f}^{(d)}$ be a set of $L^{(d)}$ unknown functions of $\boldsymbol{U}$ that also can vary depending on the $alr$-coordinate. For the sake of simplicity, and without loss of generality, we assume $M^{(d)} = M$ and $L^{(d)}=L$, fixing the number of covariates and the number of functions as the same in each linear predictor. Finally, we define $\boldsymbol{\beta}^{(d)}_{(M+1) \times 1}$ a $M + 1$-dimensional column vector that contains the parameters corresponding to the fixed effects including the intercept.

Then, the logistic-normal Dirichlet model (LNDM) can be expressed as follows:
\begin{eqnarray}
    alr(\ve{y} ) & \sim & \mathcal{ND}(\boldsymbol{\mu}, \boldsymbol{\Sigma}) \,,\label{eq:LGM} \\
    \boldsymbol{\mu}^{(d)} & = & \boldsymbol{X} \boldsymbol{\beta}^{(d)} + \sum_{l=1}^L \boldsymbol{f}_l^{(d)} (u_l) \label{eq:linear_predictor}\,, 
\end{eqnarray}
being $\ve{\mathcal{X}} = \{\ve{\beta}^{(d)}, \boldsymbol{f}^{(d)};  d = 1, \ldots, D - 1\}\,$ the latent field, $\ve{\theta}_1 = \{\sigma_d^2, \gamma : d = 1, \ldots, D-1 \}$ the hyperparameters corresponding to the likelihood, and $\ve{\theta}_2$ the hyperparameters corresponding to the functions $f$. 

\subsection{LNDM in \textbf{R-INLA}}
\textbf{R-INLA} has been implemented in the sense that each data item is linked to one element of the Gaussian field. Although in this new INLA era, this condition disappears \citep{van2023new}, it is still a challenge to fit models with multivariate likelihoods. Some approximations exist for Multinomial likelihood using the Poisson-Laplace trick \citep{baker1994}, or the Dirichlet likelihood converting the original Dirichlet observations into Gaussian pseudo-observations conditioned to the linear predictor \citep{martinez-minaya2023}. In our case, the main challenge is to estimate the variance-covariance matrix of the $\mathcal{ND}(\boldsymbol{\mu}, \boldsymbol{\Sigma})$ distribution, in particular, $p(\gamma \mid \boldsymbol{y})$. To do so, we adopt the strategy of modelling each $alr$-coordinate as if we were modelling multiple likelihoods \citep{krainski2018}, and the covariance hyperparameter is estimated using independent random effects through the following well-known proposition.
\begin{proposition}
Let $z_d$, $d = 1, \ldots, D-1$ be independent Gaussian random variables with different mean $\mu_d$ variances $\sigma_{d}^2$, and $u \sim \mathcal{N}(0, \gamma)$. Then, the multivariate random variable $\boldsymbol{y}$, defined as:
\begin{equation}
  \begin{array}{rcl}
  y_1 & = & z_1 + u \\
  y_2 & = & z_2 + u \\
  \vdots & = & \vdots \\
  y_{D-1} & = & z_{D-1} + u
  \end{array}
\end{equation}
follows a multivariate Gaussian with mean $\boldsymbol{\mu}$ and covariance matrix $\boldsymbol{\Sigma}$ whose elements are:
\begin{equation}
  \begin{array}{rcl}
\Sigma_{dd} & = & \sigma_d^2 + \gamma \,, \ d = 1, \ldots, D-1 \\
\Sigma_{dj} & = & \gamma\,, d \neq j \, \nonumber
    \end{array}
\end{equation}
\end{proposition}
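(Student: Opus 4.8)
The plan is to exhibit $\boldsymbol{y}$ as a linear image of a vector of mutually independent Gaussians, which at once guarantees that $\boldsymbol{y}$ is jointly Gaussian, and then to read off the mean and covariance via the standard affine-transformation rule. The conceptual point to keep in mind is that matching the first two moments is not by itself enough to conclude that $\boldsymbol{y}$ is Gaussian; one must first establish joint Gaussianity through the linear-map structure, and only afterwards identify the parameters.

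First I would collect the building blocks into a single random vector $\boldsymbol{w} = (z_1, \ldots, z_{D-1}, u)^{\top}$. Since $z_1, \ldots, z_{D-1}$ and $u$ are mutually independent and each is univariate Gaussian, their joint density factorises into the product of the marginals, which is precisely the density of a $D$-variate Gaussian with diagonal covariance. Hence $\boldsymbol{w} \sim \mathcal{N}(\boldsymbol{m}, \boldsymbol{D})$ with $\boldsymbol{m} = (\mu_1, \ldots, \mu_{D-1}, 0)^{\top}$ and $\boldsymbol{D} = \operatorname{diag}(\sigma_1^2, \ldots, \sigma_{D-1}^2, \gamma)$.

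Second, I would write the defining equations $y_d = z_d + u$ in matrix form as $\boldsymbol{y} = \boldsymbol{A}\boldsymbol{w}$, where $\boldsymbol{A}$ is the $(D-1) \times D$ matrix $[\,\boldsymbol{I}_{D-1} \mid \boldsymbol{1}\,]$ consisting of the identity in its first $D-1$ columns and a column of ones in the last. Because an affine image of a Gaussian vector is again Gaussian, $\boldsymbol{y} \sim \mathcal{N}(\boldsymbol{A}\boldsymbol{m},\, \boldsymbol{A}\boldsymbol{D}\boldsymbol{A}^{\top})$. Evaluating the mean gives $\boldsymbol{A}\boldsymbol{m} = (\mu_1, \ldots, \mu_{D-1})^{\top} = \boldsymbol{\mu}$, as claimed.

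Finally, I would compute $\boldsymbol{\Sigma} = \boldsymbol{A}\boldsymbol{D}\boldsymbol{A}^{\top}$ entrywise: the identity block places $\sigma_d^2$ on the $d$th diagonal entry and nothing off the diagonal, while the trailing column of ones contributes $\gamma$ to every entry, yielding $\Sigma_{dd} = \sigma_d^2 + \gamma$ and $\Sigma_{dj} = \gamma$ for $d \neq j$. Equivalently, one can bypass the matrix algebra and argue directly from independence: $\operatorname{Var}(y_d) = \operatorname{Var}(z_d) + \operatorname{Var}(u) = \sigma_d^2 + \gamma$, and for $d \neq j$ the only surviving term in $\operatorname{Cov}(z_d + u,\, z_j + u)$ is $\operatorname{Var}(u) = \gamma$. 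There is no genuine obstacle here; the only step demanding care is not to omit the joint-Gaussianity argument, since correct moments alone do not certify a Gaussian law.
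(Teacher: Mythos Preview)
Your argument is correct and complete: expressing $\boldsymbol{y}$ as a linear map of the independent Gaussian vector $(z_1,\ldots,z_{D-1},u)^\top$ secures joint Gaussianity, and the moment computation is routine. The paper does not actually supply a proof of this proposition---it is stated and immediately described as ``simple but powerful'' and ``well-known''---so there is nothing to compare against; your write-up fills in exactly the standard justification the authors omit.
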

This proposition is simple but powerful, as with independent Gaussian distributions and a shared random effect between predictors, $p(\gamma \mid \boldsymbol{y})$ can be easily estimated. So, this structure fits perfectly in the context of LGMs. Thus, to estimate LNDM in \textbf{R-INLA}, we only need to add an individual shared random effect between linear predictors corresponding to the different $alr$-coordinates. 

\subsection{A simulated example}
In this section, we will exemplify, using a simulated scenario, the process of fitting CoDa using \inla{}. To elucidate, we will initiate with a simplistic case featuring solely three categories and one covariate. We will presuppose that the impact of this covariate differs for each predictor. Subsequently, we will designate this model as a Type II model. The model structure with which we are going to operate in this example is:
\begin{eqnarray}
    alr(\boldsymbol{Y} ) & \sim & \mathcal{ND}((\boldsymbol{\mu}^{(1)}, \boldsymbol{\mu}^{(2)}), \boldsymbol{\Sigma}) \,, \\
    \boldsymbol{\mu}^{(d)} & = & \boldsymbol{X} \boldsymbol{\beta}^{(d)} \,, 
\label{eq::model_example}
\end{eqnarray}
where $\ve{X}_{N \times 2}$ is a matrix with ones in the first column and values of the covariate simulated from a Uniform distribution between $-0.5$ and $0.5$. Four different parameters compose the model, and they form the latent field: $\boldsymbol{\mathcal{X}} = \{\beta_0^{(1)}, \beta_0^{(2)}, \beta_1^{(1)}, \beta_1^{(2)}\}$. Moreover, three different hyperparameters are included in the model and they form the set of hyperparameters $\boldsymbol{\theta} = \{\sigma_1^2, \sigma_2^2, \gamma\}$. 

\subsubsection{Data Simulation}
In this part of the manuscript, we present an example of how simulation can be conducted. First at all, we define the values of the hyperparameters and we compute the correlation matrix in $\ve{\Sigma}$. $N = 1000$, $D = 3$, $\sigma_1^2 = 0.5$, $\sigma_2^2=0.4$ and $\gamma = 0.1$ are the choosen values for the simulation.

\begin{verbatim}
R> D <- 3
R> N <- 1000
R> sigma2 <- c(0.5, 0.4)
R> cov_param <- 0.1
R> sigma_diag <- sqrt(sigma2 + cov_param)
\end{verbatim}

Correlation matrix can also be easily computed. This matrix is formed for $((D-1)^2 - (D-1))/2$ values out of the diagonal.
\begin{verbatim}
R> rho <- diag(1/sigma_diag) %*% matrix(cov_param, D-1, D-1) %*% 
    diag(1/sigma_diag)
R> diag(rho) <- 1
\end{verbatim}

Next step is simulating the covariate.
\begin{verbatim}
R> x = runif(N)-0.5
\end{verbatim}

Subsequently, with fixed betas, $\beta_0^{(1)} = -1$, $\beta_1^{(1)} = 1$, $\beta_0^{(2)} = -1$, $\beta_1^{(2)} = 2$, we construct the values for the two linear predictors.
\begin{verbatim}
R> betas = matrix(c(-1, 1,
          -1, 2), nrow = D-1, byrow = TRUE)
R> X <- data.frame(1, x) %>% as.matrix(.)
R> lin.pred <- X %*% t(betas) 
\end{verbatim}

Simulating from a multivariate Gaussian with the structure previously constructed is the next step. And with it, we obtain the $alr$-coordinates.
\begin{verbatim}
R> Sigma <- matrix(sigma_diag, ncol = 1) %*% 
    matrix(sigma_diag, nrow = 1)
R> Sigma <- Sigma*rho
R> lin.pred %>%
  apply(., 1, function(z)
    MASS::mvrnorm( n  = 1,
             mu = z,
             Sigma = Sigma)) %>%
  t(.)-> alry
\end{verbatim}

Finally, we move to the simplex assuming the third category the reference one. the output is a matrix with the response variable summing their rows up to one. We create a data.frame in order to keep the CoDa, the $alr$-coordinates and the covariate x. In Figure \ref{fig::simulated_data}, CoDa generated and $alr$-coordinates have been depicted.
\begin{verbatim}
R> y.simplex <- compositions::alrInv(alry)
R> data <- data.frame(alry, y.simplex, x)
\end{verbatim}

\begin{figure}
    \includegraphics[width=\textwidth]{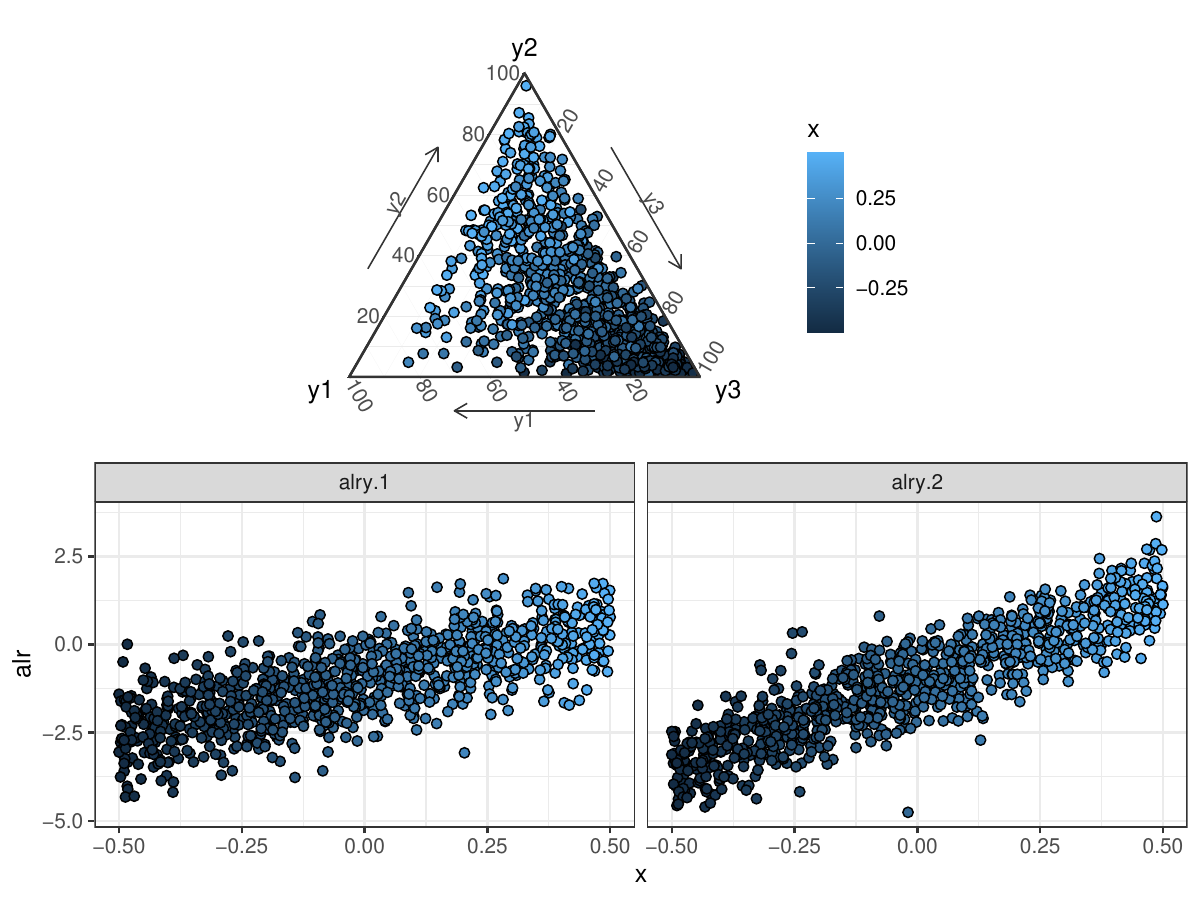}
    \caption{Top: CoDa simulated represented in the Simplex. Bottom: alr-coordiantes in terms of the generated covariate $x$.}
    \label{fig::simulated_data}
\end{figure}

\subsubsection{Preparing data for being introduced in \inla{}}
In this section, the most labor-intensive step is preparing the database to be input into \inla{}. To do this, we will make use of structures like \code{inla.stack}. In this structure, we need to include the multiresponse variable, where we will incorporate different $alr$-coordinates. Additionally, we will input the covariates, indicating which $alr$-coordinate they will affect, along with an index that will assist us in introducing the shared random effect for estimating the hyperparameter gamma. So, we start defining such index.
\begin{verbatim}
R> id.z <- 1:dim(alry)[1]
\end{verbatim}

Posteriorly, we extent the dataset for construct the multivariate response which will be a matrix with dimension $(N \times (D-1)) \times (D-1)$, being the first column formed for the first $alr$-coordinate en N first rows, and NAs in the rest; the second column formed by the second $alr$-coordinate in the positions (N + 1):(2N), and NAs in the rest, and so on.
\begin{verbatim}
R> data_ext <- data %>%
  tidyr::pivot_longer(., cols = all_of(paste0("alry.", 1:(D-1))),
                      names_to  = "y.names",
                      values_to = "y.resp") %>%
  .[order(ordered(.$y.names)),]
R> data_ext$y.names <- ordered(data_ext$y.names)

R> names_y <- paste0("alry.", 1:(D-1))

R> 1:length(names_y) %>%
  lapply(., function(i){
    data_ext %>%
      dplyr::filter(y.names == names_y[i]) -> data_comp_i
    #Response
    y_alr <- matrix(ncol = names_y %>% 
        length(.), nrow = dim(data_comp_i)[1])
    y_alr[, i] <- data_comp_i$y.resp
  }) -> y.resp


R> 1:length(names_y) %>%
  lapply(., function(i){
    y_aux <- data_ext %>%
      dplyr::select(y.resp, y.names) %>%
      dplyr::filter(y.names == names_y[i]) %>%
      dplyr::select(y.resp) %>%
      as.matrix(.)
    aux_vec <- rep(NA, (D-1))
    aux_vec[i] <- 1
    kronecker(aux_vec, y_aux)
  }) -> y_list

R> y_tot <- do.call(cbind, y_list)
R> y_tot %>% head(.)
\end{verbatim}

\begin{verbatim}
R>       [,1] [,2]
R> [1,] -1.580 NA
R> [2,] -1.345 NA
R> [3,] -1.735 NA
R> [4,] -1.012 NA
R> [5,] -0.584 NA
R> [6,] -0.041 NA
\end{verbatim}

In the model, covariates are going to be included as random effects with big variance. So, we need the values of the covariates, and also, an index indicating to which $alr$-coordinate it belongs.
\begin{verbatim}
R> variables <- c("intercept", data %>%
                 dplyr::select(starts_with("x")) %>%
                 colnames(.))
R> id.names <- paste0("id.", variables)
R> id.variables <- rep(data_ext$y.names %>% as.factor(.) %>% 
                    as.numeric(.), 
                    length(variables)) %>%
  matrix(., ncol = length(variables), byrow = FALSE)
R> colnames(id.variables) <- id.names
\end{verbatim}

Finally, we create the \code{inla.stack} for estimation, and we are ready for fitting the model.
\begin{verbatim}
R> stk.est <- inla.stack(data    = list(resp = y_tot),
                      A       = list(1),
                      effects = list(cbind(data_ext %>%
                        dplyr::select(starts_with("x")),
                        data_ext %>%
                        dplyr::select(starts_with("id.z")),
                        id.variables,
                        intercept = 1)),
                      tag     = 'est')
R> colnames(id.variables) <- id.names
\end{verbatim}

\subsubsection{Fitting the model}
For fitting the model, it is required to define priors for the parameters and hyperparameters. Prior considered for the parameters are the default ones used in \inla{}. However, PC-priors \citep{simpson_penalising_2017} are considered for the standard deviations and the root square of the covariance parameter $\gamma$, in particular, PC-prior$(1, 0.01)$ were used for $\sigma_1$, $\sigma_2$ and $\sqrt{\gamma}$. So, the required formula to be introduced in \inla{} was:
\begin{verbatim}
R> list_prior <- rep(list(list(prior = "pc.prec", 
    param = c(1, 0.01))), D-1)

R> formula.typeII <- resp ~ -1 +
  f(id.intercept, intercept,
    model   = "iid",
    initial = log(1/1000),
    fixed   = TRUE) +
  f(id.x, x,
    model   = "iid",
    initial = log(1/1000),
    fixed   = TRUE) +
  f(id.z,
    model = "iid",
    hyper = list(prec = list(prior = "pc.prec",
                 param = c(1, 0.01))), constr = TRUE)
\end{verbatim}

and the call to \inla{}:
\begin{verbatim}
model.typeII <- inla(formula.typeII,
                     family         = rep("gaussian", D-1),
                     data           = inla.stack.data(stk.est),
                     control.compute = list(config = TRUE),
                     control.predictor = list(A = inla.stack.A(stk.est),
                                              compute = TRUE),
                     control.family = list_prior,
                     inla.mode = "experimental" ,
                     verbose = FALSE)
\end{verbatim}

In Figures \ref{fig::marginals_fixed} and \ref{fig::marginals_hyperpar}, marginal posterior distributions jointly with the simulated value are depicted showing that we were able to recover the original value.
\begin{figure}
    \includegraphics[width=\textwidth]{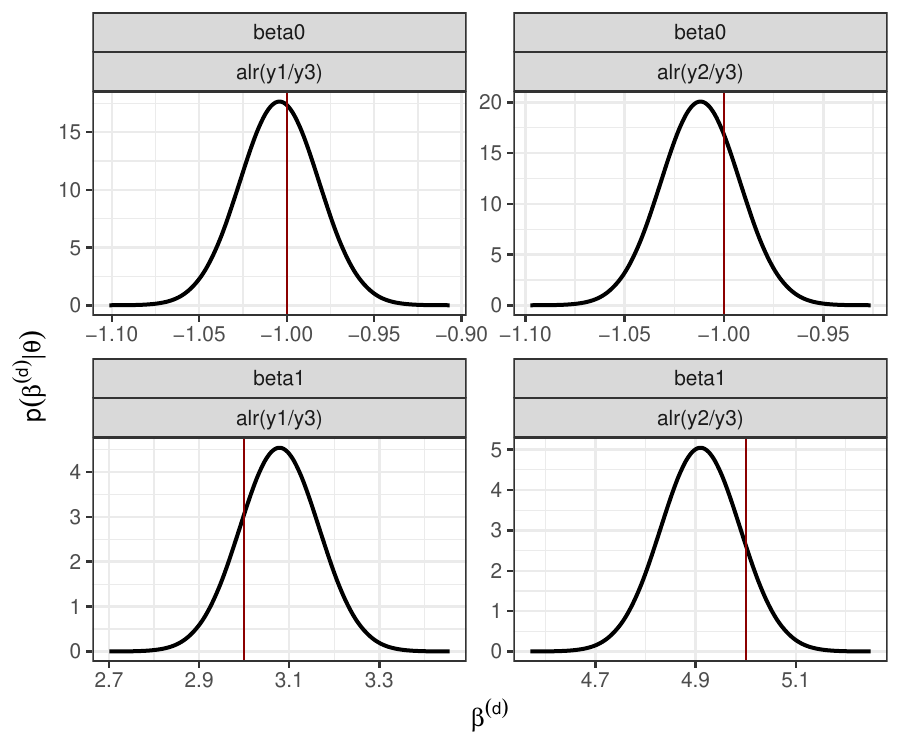}
    \caption{Marginal posterior distributions for the fixed effects. Simulated original value is also depicted.}
    \label{fig::marginals_fixed}
\end{figure}
\begin{figure}
    \includegraphics[width=\textwidth]{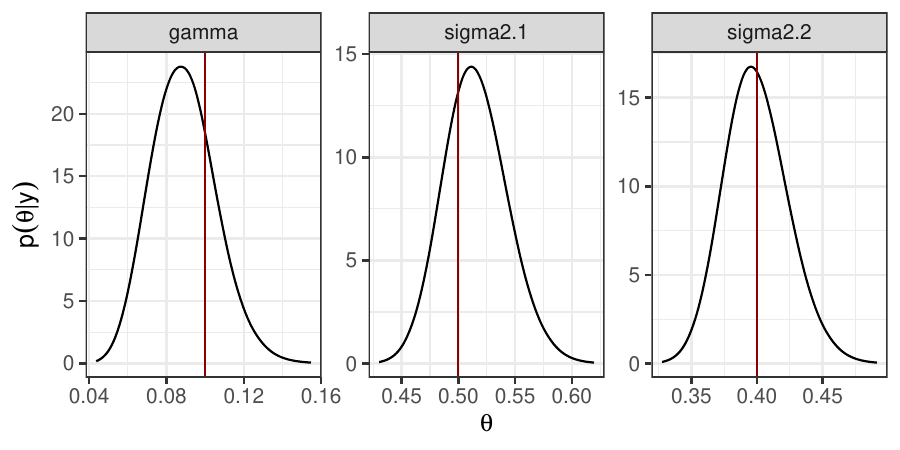}
    \caption{Marginal posterior distributions for the hyperparameters. Simulated original value is also depicted.}
    \label{fig::marginals_hyperpar}
\end{figure}


\section{Spatial LNDM and model selection} \label{sec::spatial}
Once the LNDM is defined, a particular focus lies on how more intricate structures within the linear predictor can be accommodated within the \textbf{R-INLA} framework. Furthermore, another issue pertains to model selection. Hence, this section is dedicated to spatial LNDMs and the utilization of measures such as DIC, WAIC, and LCPO for model selection.

\subsection{Spatial LNDMs}\label{sec:spatial-lndm}
Of particular interest are the LNDMs in the spatial context. The analysis of the spatial process refers to the analysis of data collected in space. Space can be indexed over a discrete domain or a continuous one. So, spatial statistics is traditionally divided into three main areas depending on the type of problem and data: lattice data, Geostatistics and point patterns. For a review of models of different types of spatial data, see \citet{haining2003} and \citet{cressie2015}. When a spatial effect has to be included in the model, it is common to formulate mixed-effects regression models in which the linear predictor is made up of a trend plus a spatial variation, the spatial effect being modelled with correlation random effects and matching perfectly the structure presented in Equation \eqref{eq:linear_predictor}.

\pkg{R-INLA} provides many options when implementing Gaussian latent spatial effects \citep{gomez-rubio2020}, including intrinsic conditional autoregressive models (iCAR) or conditional autoregressive models (CAR) for areal data \citep{besag1991} or spatial effect with Matèrn covariance function for continuous processes \citep{lindgren2011}. These effects can be easily included in the LNDM. As we are adopting a multiple likelihood modelling strategy, we make use of the features that \pkg{R-INLA} provides for fitting multiple likelihoods in a jointly way such as \code{copy} or \code{replicate}, and they are necessary to incorporate shared and replicate random effects in the modelling. For details about its implementation, we refer the reader to the website \url{r.inla.org} and books by \citet{krainski2018} and \citet{gomez-rubio2020}. The example presented in this manuscript involves Geostatistical data, but it can be easily applicable to other latent Gaussian effects.

\subsection{Model selection and validation} \label{sec:model-sel}
Regarding the model selection process, sometimes there are a large number of models resulting from all the possible combinations of covariates, and combining them with the possible latent effects that can be incorporated increases the number of possibilities exponentially. \inla{} has proved to be fast enough to compute huge numbers of models as well as different measures to make the model selection process feasible. Such measures include Deviance Information Criteria \citep[DIC]{spiegelhalter2002}, defined as a hierarchical modelling generalisation of the Akaike information criterion (AIC); Watanabe Akaike information criterion \citep[WAIC]{watanabe2010, gelman2014}, which is the sum of two components: one quantifying the model fit and the other evaluating the model complexity; or the cross-validation measure conditional predictive ordinate (CPO) for evaluating the predictive capacity and its log-score (LCPO) \citep[][LCPO]{pettit1990, roos2011}. The models with the lowest values of DIC, WAIC or LCPO have preference over the rest.


{However, \textbf{R-INLA} is programmed to handle univariate likelihoods, and the variability added with the inclusion of the new random effect is not being considered when the calculation of the deviance is computed. This affects the computation of the DIC and WAIC. So an additional process is needed to calculate DIC and WAIC when the response variable follows a multivariate normal distribution. This process must be able to incorporate the elements that are off the diagonal of the variance-covariance matrix. To achieve this, a post-processing of the model is performed for obtaining samples of the jointly posterior distributions using the feature \code{inla.posterior.sample} function, and the likelihood of the multivariate normal distribution is calculated. The remaining calculations for DIC and WAIC are straightforward by applying their respective formulas. These two ways have been added in a new function in \texttt{R}.}

{The same does not apply to the CPO, as it is based on the posterior predictive distribution. In \ref{sec::appendix_cpo}, there is a proof of why the CPO is not affected by the approach we propose here. However, we believe that the CPO cannot be calculated in the same way when dealing with CoDa, and therefore, we propose a new definition.}

\subsubsection{CPO}

In the context of CoDa cross-validation process, excluding a category from a CoDa point may not make sense, as we know that CoDa have a constraint: their sum must be 1. This implies that the remaining categories provide valuable information about the category we are excluding. One might think that working in the log-ratio coordinates could alleviate this issue, but that is not the case. The reference category is present in all the log-ratios, and thus we encounter a similar situation. At that point, the remaining log-ratio coordinates provide information about the category we have removed during cross-validation. In this manner, the concept of friendship emerges. Consequently, we can assert that the first $alr$-coordinate of individual $n$ is friend of the second $alr$-coordinate of individual $n$, and is thereby contributing information. Hence, in order to conduct cross-validation for individual $n$ and $alr$-coordinate $d$, it is necessary to exclude the values from all $alr$-coordinates pertaining to that individual. Accordingly, we can define the CPO for the $n$ data point and $d$ $alr$-coordinate as:
\begin{equation}
    \text{CPO}_n^{(d)} = \int p(alr(\ve{y})_n^{(d)} \mid \boldsymbol{\mathcal{X}}, \boldsymbol{\theta}) \, p(\boldsymbol{\mathcal{X}}, \boldsymbol{\theta} \mid alr(\ve{y})_{-n}^{\bullet}) \, d\boldsymbol{\mathcal{X}} d \boldsymbol{\theta} \,,
\end{equation}
being $alr(\ve{y})_n^{(d)}$ the observed vector for the $n$-data point and the $d$ $alr$- coordinate, and $alr(\ve{y})_{-n}^{\bullet}$ represents the observed data in $alr$- coordinates ($N-1$ data points with $D-1$ components for data point) excluding the $n$ data point with its corresponding $D-1$ $alr$-coordinates. We then compute the log-score \citep{gneiting2007} as: 
\begin{equation}
    \text{LCPO} = -\frac{1}{N \cdot (D-1)} \sum_{d=1}^{D-1} \sum_{n=1}^{N} \log{(\text{CPO}_n^{(d)})} \,.
\end{equation}

\section{Continuous Spatial example: the case of \textit{Arabidopsis thaliana}} \label{sec::arabidopsis}

This section is devoted to showing an application of continuous spatial LNDMs in a real setting.


\subsection{\textit{The data}}
We worked with a collection of 301 accessions of the annual plant \textit{Arabidopsis thaliana} on the Iberian Peninsula. For each accession, the probability of belonging to each of the 4 genetic clusters (GC) inferred in \citet{martinez-minaya2019}, namely, GC1, GC2, GC3 and GC4, were available (Fig. \ref{fig::arabidopsis_raw}), their sum total being 1. We were interested in estimating the probability of membership, which in this particular context can be thought of as the habitat suitability for each genetic cluster. To do so, we employed LNDMs including climate covariates and spatial terms in the linear predictor. In particular, two bioclimatic variables were used to define the climatic part: annual mean temperature (BIO1) and annual precipitation (BIO12). The complete dataset was downloaded from the repository \citet{martinez-minaya2019softw}. Climate covariates were scaled before conducting the analysis. 
\begin{figure}
    \includegraphics[width=\textwidth]{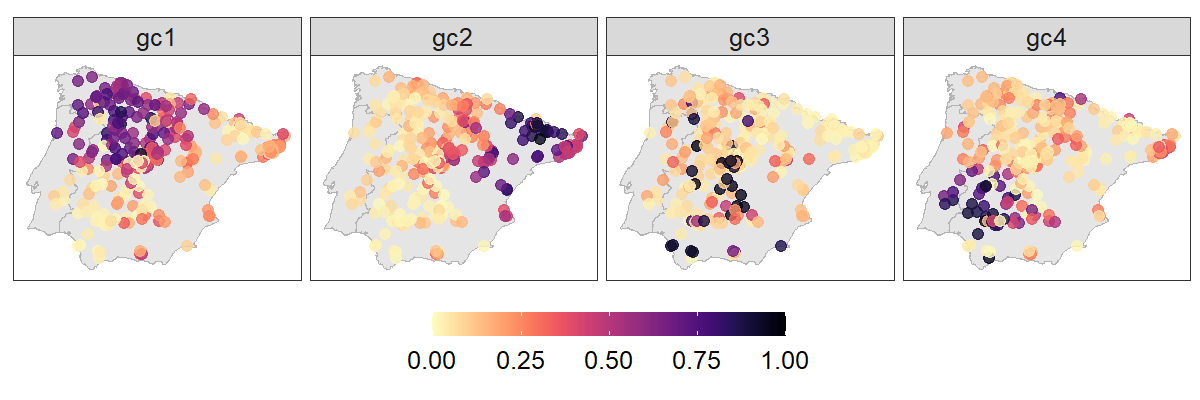}
    \caption{probability of membership of GC1, GC2, GC3 and GC4 on the Iberian Peninsula.}
    \label{fig::arabidopsis_raw}
\end{figure}

\subsection{The model}
As mentioned earlier, four categories were employed in this problem: GC1, GC2, GC3 and GC4. So, we dealt with proportions in $\mathbb{S}^4$. To produce the LNDM, we selected GC4 as the reference category because it was the one whose logarithm had the lowest variance. We were thus dealing with a three dimensional $\mathcal{ND}(\boldsymbol{\mu}, \boldsymbol{\Sigma})$. The transformed data is shown in Fig. \ref{fig::arabidopsis_raw_alr} and the models that we used to solve the problem had the following structure:
\begin{figure}
    \includegraphics[width=\textwidth]{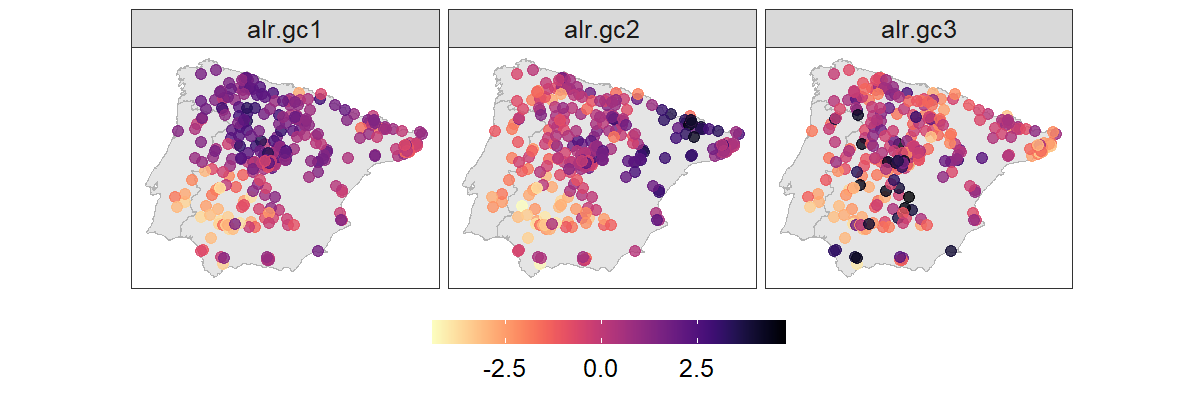}
    \caption{Additive log-ratio transformation of the proportion of GC1, GC2, GC3 and GC4 on the Iberian Peninsula, using GC4 as the reference category.}
        \label{fig::arabidopsis_raw_alr}
\end{figure}
\begin{eqnarray}
    alr(\ve{Y} ) & \sim & \mathcal{ND}(\boldsymbol{\mu}^{(1)}, \ldots, \boldsymbol{\mu}^{(d)}), \boldsymbol{\Sigma}) \\
    \boldsymbol{\mu}^{(d)} & = & \boldsymbol{X} \boldsymbol{\beta}^{(d)}+\boldsymbol{\omega}^{(d)} \,, d = 1,\ldots,3 \label{eq::model_arabidopsis}\,, 
\end{eqnarray}
$\ve{\mu}^{(d)} = (\mu^{(d)}_{1}, \ldots, \mu^{(d)}_{301})$ being the differente linear predictor for the $n$th observation in the $d$th $alr$-coordinate, and $\boldsymbol{X}_{301 \times 3}$ the design matrix, containing 1s in the first column, and the values for the climate covariates. $\boldsymbol{\omega}^{(d)}$ represents the spatial random effect with Matèrn covariance for each $d$th $alr$-coordinate, $\boldsymbol{\omega}^{(d)} \sim \mathcal{N}(0, \boldsymbol{Q}^{-1}(\sigma_{\boldsymbol{\omega}}, \phi))$, depending on the standard deviation of the spatial effect and its range. $\boldsymbol{\beta}^{(d)}_{3 \times 1}$ is the parameter vector corresponding to the fixed effects. The latent field is composed of the parameters corresponding to the fixed effects and the realisations of the random field. 
$$\boldsymbol{\mathcal{X}} = \{\boldsymbol{\beta}^{(d)}, \ve{\omega}^{(d)}  : d = 1, \ldots, 3\}\,.$$ 

In contrast, $\boldsymbol{\theta}_1 = \{\sigma_d^2, \gamma : d = 1, \ldots, 3 \}$ are the hyperparameters corresponding to the likelihood, and $\boldsymbol{\theta}_2 = \{\sigma_{\boldsymbol{\omega}}, \phi\} $ are the hyperparameters corresponding to the spatial random effect. Together they form the field of hyperparameters. Gaussian priors were assigned for the fixed effects and PC-priors were employed for the hyperparameters \citep{simpson_penalising_2017}.

Based on the model structure defined in Equation \eqref{eq::model_arabidopsis}, \textbf{R-INLA} offers flexibility by allowing us to introduce fixed effects and random effects in different ways. For the fixed effects, two different assumptions between parameters of the different $alr$-coordinates were plausible. The first was under the assumption that the effect of the $j$-covariate is the same for the different $alr$-coordinates, i.e. they were sharing the same parameter for fixed effects: $\beta_m^{(d)} = \beta_m^{(k)}$, $d \neq k$ and $d, k = 1,\ldots, 3$, $j = 0, \ldots M$. We denote it by $\beta_m$. For the second, we considered that the effect of the $m$-covariate could be different for each $alr$-coordinate. Note that this one is more general, as it includes the case where the effects are equal and also the case where we do not have the same covariates in each linear predictor. We denote them by $\beta_m^{(d)}$.

With regard to the random effects, we were able to distinguish three different cases. The first one considers that the spatial random field is the same for all the linear predictors, i.e. $\boldsymbol{\omega}^{(d)} = \boldsymbol{\omega}^{(k)}$, $d \neq k$ and $d, k = 1,\ldots, 3$. They shared exactly the same spatial term. So, we denote it by $\boldsymbol{\omega}$ as it is not dependent on the $alr$-coordinates predictor. The second case was under the assumption that the spatial fields were proportional, in other words, $\boldsymbol{\omega}^{(d)} = \alpha^{(d)} \boldsymbol{\omega}^{(k)}$, $d \neq k$ and $d, k = 1,\ldots, 3$. We denoted it by $\boldsymbol{\omega}^{(*d)}$. Finally, the third case states that the realisation of the spatial random effect is different for each linear predictor. However, they share the same hyperparameters, i.e.
$\boldsymbol{\omega}^{(d)} \neq \boldsymbol{\omega}^{(k)}$, $d \neq k$, and $d, k = 1,\ldots, 3$, where $\boldsymbol{\omega}^{(d)} \sim \mathcal{N}(0, \boldsymbol{Q}^{-1}(\sigma_{\boldsymbol{\omega}}, \phi))$. We denoted it by $\boldsymbol{\omega}^{(d)}$.

By combining fixed and random terms, we were able to define eight different structures for the linear predictors (See Table \ref{tab:s-arabidopsis} for details about the latent field and hyperparameters):
\begin{itemize}
    \item Type I: share the same parameters for fixed effects, and do not include spatial random effects.
    \item Type II: have different parameters for fixed effects, and do not include spatial random effects.
    \item Type III: share the same parameters for fixed effects, and share the same spatial effect.
    \item Type IV: have different parameters for fixed effects, and share the same spatial effect.
    \item Type V: share the same parameters for fixed effects, and the spatial effects between linear predictors are proportional. Realisations of the spatial field are the same, but a proportionality hyperparameter is added in two of the three linear predictors.
    \item Type VI: have different parameters for fixed effects, and the spatial effects between linear predictors are proportional. Realisations of the spatial field are the same, but a proportionality hyperparameter is added in two of the three linear predictors.
    \item Type VII: share the same parameters for fixed effects, and different realisations of the spatial effect for each linear predictor. Although realisations of random effects are different, they share the same hyperparameters.
    \item Type VIII: have different parameters for fixed effects, and different realisations of the spatial effect for each linear predictor. Although realisations of random effects are different, they share the same hyperparameters.
\end{itemize}

\begin{table} 
\caption{\label{tab:s-arabidopsis} Different structures included in the model in an additive way with their corresponding latent field and the hyperparameters to be estimated}
\centering
\hspace{-2cm} 
\begin{tabular}{l l l l }
\em Models &\em Predictor &\em Latent Field ($\ve{\mathcal{X}}$) &\em Hyperparameters ($\ve{\theta}$) \\
\hline \\ [-0.3cm]
Type I & $\boldsymbol{X} \boldsymbol{\beta}$ &  $\{\beta_0, \ldots, \beta_M\}$ & $
\{\sigma_d^2, \gamma\}$\\[0.1cm]
Type II & $\boldsymbol{X} \boldsymbol{\beta}^{(d)}$ &  $\{\beta_0^{(d)}, \ldots, \beta_M^{(d)}\}$ & $
\{\sigma_d^2, \gamma\}$\\[0.1cm]
\hline \\ [-0.3cm]
Type III & $\boldsymbol{X} \boldsymbol{\beta} + \boldsymbol{\omega}$ &  $\{\beta_0, \ldots, \beta_M, \omega_{1}, \ldots, \omega_{N}\}$ & $
\{\sigma_d^2, \gamma, \sigma_{\boldsymbol{\omega}}, \phi\}$\\[0.1cm]
Type IV & $\boldsymbol{X} \boldsymbol{\beta}^{(d)} + \boldsymbol{\omega}$ &  $\{\beta_0^{(d)}, \ldots, \beta_M^{(d)}, \omega_{1}, \ldots, \omega_{N}\}$ & $
\{\sigma_d^2, \gamma, \sigma_{\boldsymbol{\omega}}, \phi\}$\\[0.1cm]
\hline \\ [-0.3cm]
Type V & $\boldsymbol{X} \boldsymbol{\beta} + \boldsymbol{\omega}^{*(d)}$ & $\{\beta_0, \ldots, \beta_M, \omega_{1}, \ldots, \omega_{N}\}$ & $
\{\sigma_d^2, \gamma, \sigma_{\boldsymbol{\omega}}, \phi, \alpha^{(1)}, \alpha^{(2)} \}$\\ [0.1cm]
Type VI & $\boldsymbol{X} \boldsymbol{\beta}^{(d)} + \boldsymbol{\omega}^{*(d)}$ & $\{\beta_0^{(d)}, \ldots, \beta_M^{(d)}, \omega_{1}, \ldots, \omega_{N}\}$ & $
\{\sigma_d^2, \gamma, \sigma_{\boldsymbol{\omega}}, \phi, \alpha^{(1)}, \alpha^{(2)} \}$\\[0.1cm]
\hline \\ [-0.3cm]
Type VII & $\boldsymbol{X} \boldsymbol{\beta} + \boldsymbol{\omega}^{(d)}$ & $\{\beta_0, \ldots, \beta_M, \omega_{1}^{(d)}, \ldots, \omega_{N}^{(d)}\}$ & $
\{\sigma_d^2, \gamma, \sigma_{\boldsymbol{\omega}}, \phi\}$\\[0.1cm]
Type VIII & $\boldsymbol{X} \boldsymbol{\beta}^{(d)} + \boldsymbol{\omega}^{(d)}$ & $\{\beta_0^{(d)}, \ldots, \beta_M^{(d)}, \omega_{1}^{(d)}, \ldots, \omega_{N}^{(d)}\}$ & $
\{\sigma_d^2, \gamma, \sigma_{\boldsymbol{\omega}}, \phi \}$\\
\hline
\end{tabular}
\end{table}

\subsection{Results}
Model selection was conducted including the intercept and also the two climatic covariates combining them with the spatial effects for the different structures presented in Table \ref{tab:s-arabidopsis}. 8 models were fitted and the DIC, WAIC and LCPO were computed (Table \ref{tab:model-selection}).

\begin{table}
\caption{\label{tab:model-selection} LNDMs with their corresponding DIC, WAIC and LCPO.}
\centering
\begin{tabular}{l l l l l }
\em Models &\em Predictor & DIC  & WAIC & LCPO \\
\hline \\ [-0.3cm]
Type I	& $\boldsymbol{X} \boldsymbol{\beta}$ & 3353.252	& 3353.030	& 1.894 \\[0.1cm]
Type II	& $\boldsymbol{X} \boldsymbol{\beta}^{(d)}$ & 3294.885	& 3295.861	& 1.867 \\[0.1cm]
\hline \\ [-0.3cm]
Type III & $\boldsymbol{X} \boldsymbol{\beta} + \boldsymbol{\omega}$ & 3203.159	& 3209.774	& 1.787 \\[0.1cm]
Type IV	& $\boldsymbol{X} \boldsymbol{\beta}^{(d)} + \boldsymbol{\omega}$ & 3146.302	& 3155.212	& 1.758 \\[0.1cm]
\hline \\ [-0.3cm]
Type V	& $\boldsymbol{X} \boldsymbol{\beta} + \boldsymbol{\omega}^{*(d)}$ & 3056.458	& 3057.569	& 1.442 \\[0.1cm]
Type VI	& $\boldsymbol{X} \boldsymbol{\beta}^{(d)} + \boldsymbol{\omega}^{*(d)}$ & 2999.880	& 3000.990	& \bf{1.381} \\[0.1cm]
\hline \\ [-0.3cm]
Type VII	& $\boldsymbol{X} \boldsymbol{\beta} + \boldsymbol{\omega}^{(d)}$ & 2755.007	& 2761.289	& 1.419 \\[0.1cm]
Type VIII & $\boldsymbol{X} \boldsymbol{\beta}^{(d)} + \boldsymbol{\omega}^{(d)}$ & \bf{2741.928}	& \bf{2749.671}	& 1.404 \\
\hline
\end{tabular}
\end{table}

In view of the results in the model selection, and based on DIC and WAIC, we observed that the one with type VIII structure seemed to be the best at representing the process of interest. On the contrary, the LCPO indicates that the best model features a Type VI structure. However, as the difference is just $0.023$, we proceeded with the model Type VIII for making the computation of the posterior distributions and also for making the predictions. Then, \textbf{R-INLA} allowed us to compute the posterior distribution for the fixed effects (Fig. \ref{fig::arabidopsis_fixed}) in each $alr$-coordinate. As we have argued in favour of $alr$, it is easy to interpret in terms of ratios. 

If we focus on the covariate $BIO1$ (annual mean temperature), we observed that in presence of $BIO12$, it is relevant with a probability of $0.972$ for the coefficient to be lower than 0 in the the first $alr$-coordinate, $0.99$ for the second one, and $0.99$ for the third. Therefore, in all three cases, we shall presume the covariate to be relevant and proceed to interpret the coefficients (Fig. \ref{fig::arabidopsis_fixed}). We observed that the ratio between the probability of belonging to GC1 and the probability of belonging to GC4 reduces by approximately $20\%$ when the scaled covariate annual mean temperature increased by one unit. For the case of the ratio between the probability of belonging to GC2 and GC4, it decreased by $32\%$ when the scaled covariate annual mean temperature increased by one unit. Finally, the ratio between the probability of belonging to GC3 and GC4 decreased by $50\%$ when the covariate annual mean temperature increased by one unit.

If we focus on the covariate present in the model $BIO12$ (annual precipitation), we noted that in presence of $BIO1$, it is relevant with a probability of $0.72$ for the coefficient to be lower than 0 in the the first $alr$-coordinate. Not happen the same for the second and third $alr$-coordinate, as the probability to be lower than 0 are $0.43$ and $0.46$ respectively. As a result, we assume the covariate's relevance in the first $alr$-coordinate and we proceed to interpret its coefficient (Fig. \ref{fig::arabidopsis_fixed}). The ratio between the probability of belonging to GC1 and the probability of belonging to GC4 decreases by approximately $6\%$ when the scaled covariate $BIO12$ increased by one unit and $BIO1$ remains constant. 

With the method implemented here, we are able to make predictions not only on the $alr$-coordinates scale (Fig. \ref{fig::arabidopsis_prediction_alr}), but also on the original scale (Fig. \ref{fig::arabidopsis_prediction}). If we focus on Fig. \ref{fig::arabidopsis_prediction_alr}, we observe how in the north-west of Spain the ratio between the probability of belonging to GC1 and GC4 reached 12, meaning that at those points the probability of belonging to GC1 is 12 times greater than the probability of belonging to GC4. Something similar happened in the north-east of the Iberian Peninsula, where the probability of belonging to GC2 is 12 times greater than the probability of belonging to GC4. The case of the third $alr$-coordinate seems a bit different, and the greatest difference between the probability of belonging to GC3 and GC4 is found in the centre of the Iberian Peninsula.
\begin{figure}
    \includegraphics[width=0.9\textwidth]{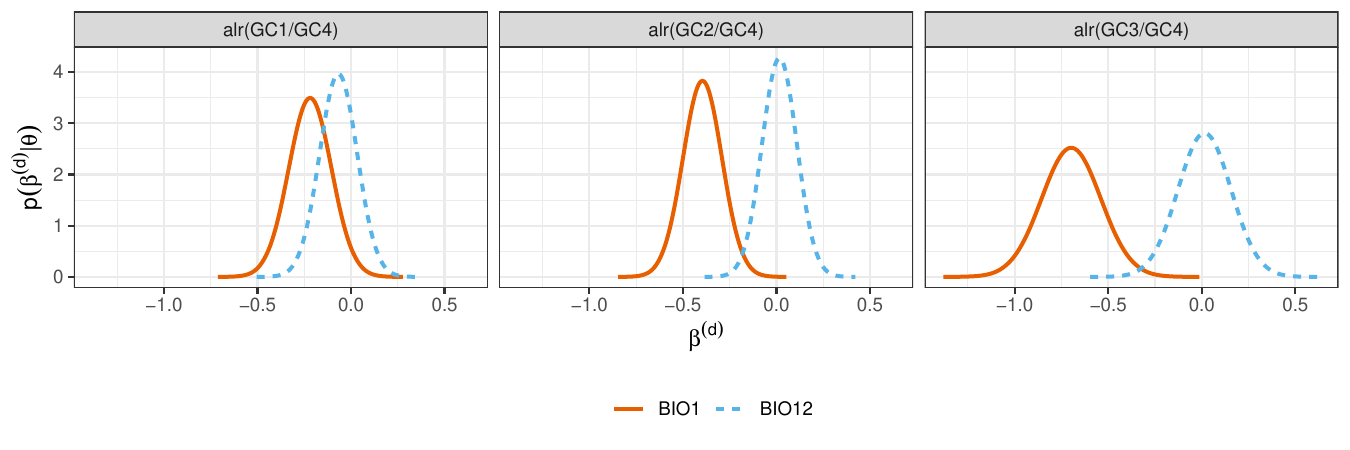}
    \caption{Marginal posterior distribution for the parameters corresponding to the fixed effects or each of the alr-coordinates: $BIO2$ and $BIO12$.}      
    \label{fig::arabidopsis_fixed}
\end{figure}

\begin{figure}
    \includegraphics[width=0.9\textwidth]{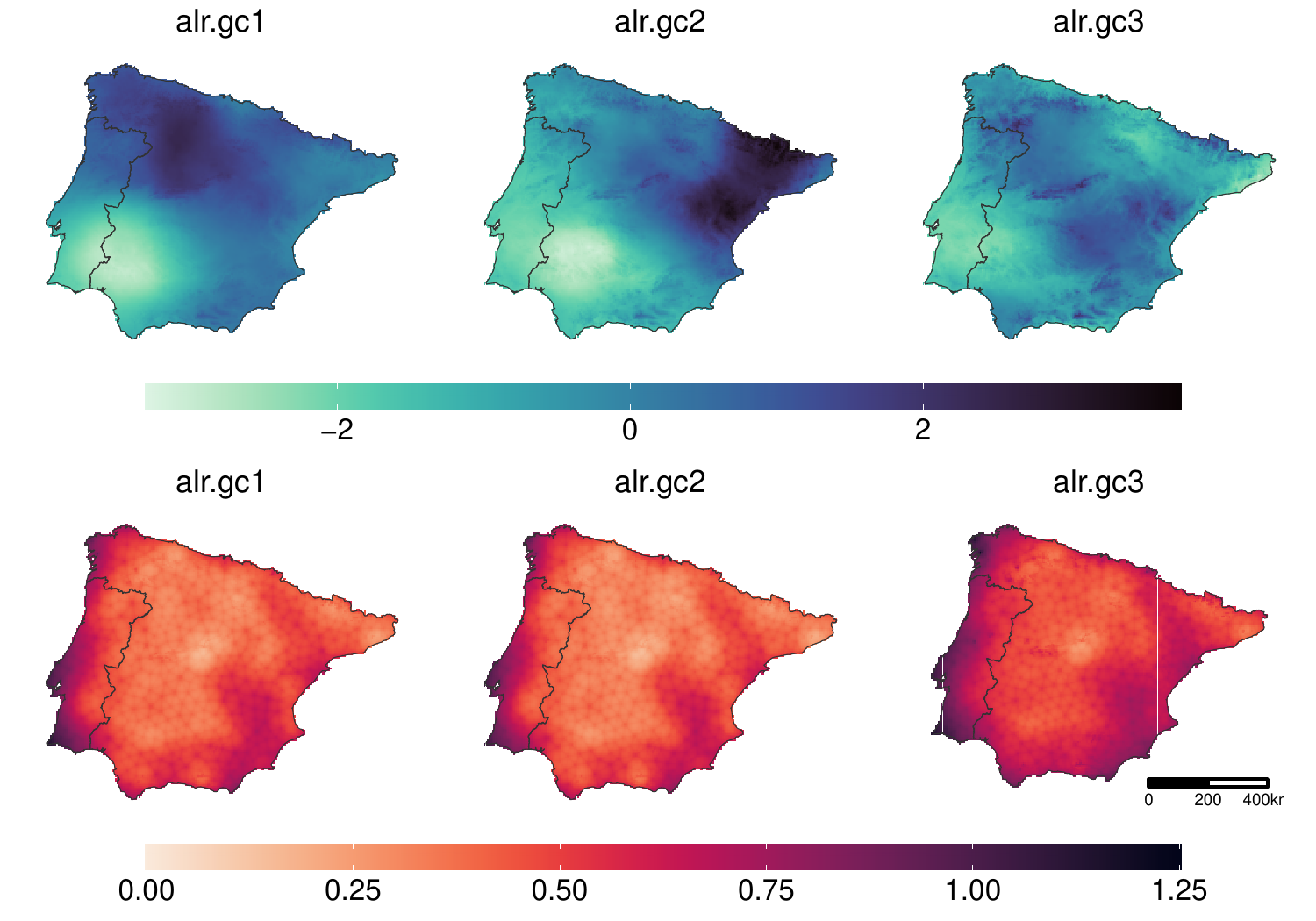}
    \caption{Mean and standard deviation of the posterior predictive distribution for the alr-coordinates.}
    \label{fig::arabidopsis_prediction_alr}
\end{figure}

\begin{figure}
    \includegraphics[width=\textwidth]{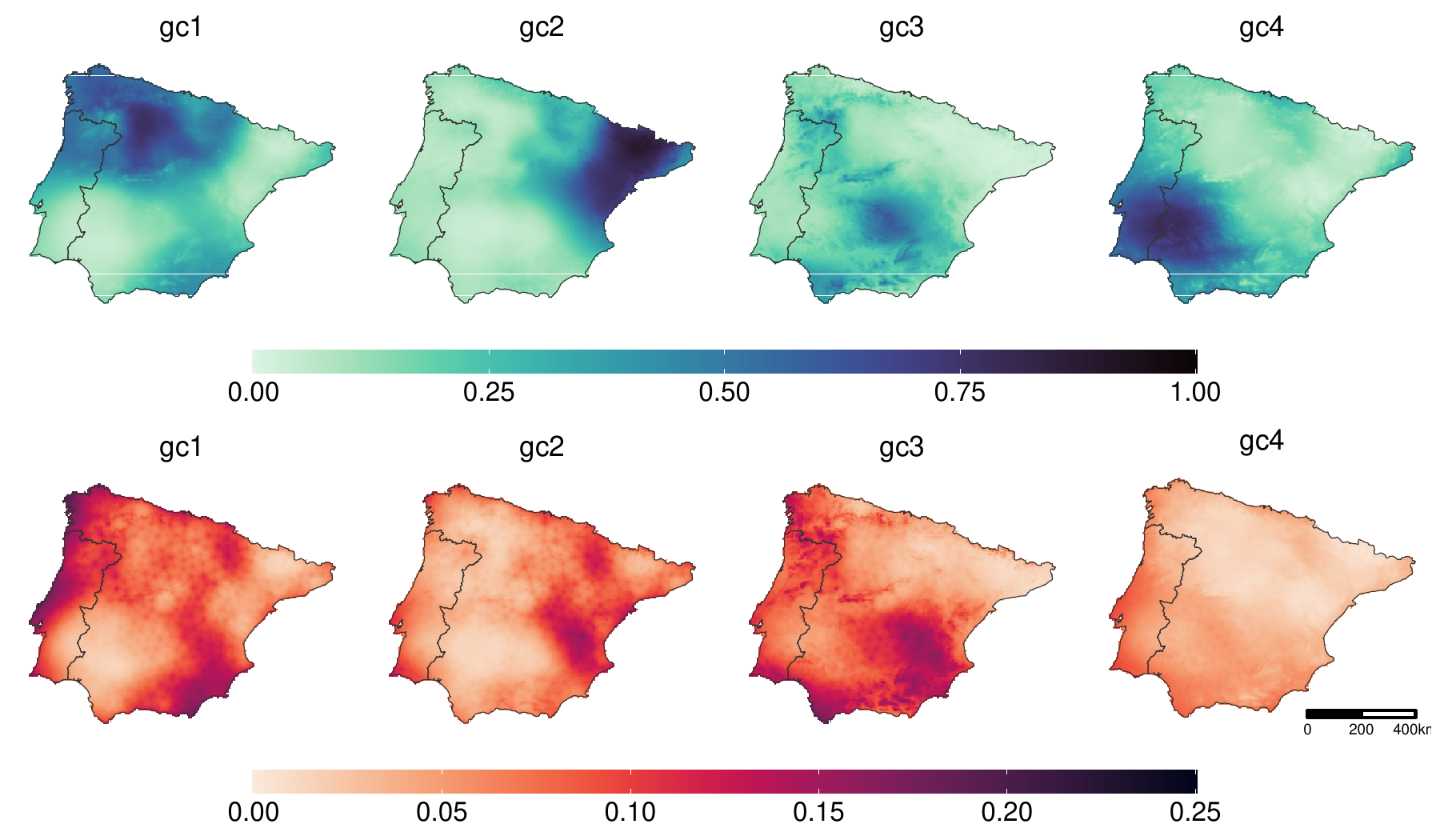}
    \caption{Mean and standard deviation of the posterior predictive distribution for the probability of belonging to GC1, GC2, GC3 and GC4.}
    \label{fig::arabidopsis_prediction}
\end{figure}
Finally, it is accessible to compute marginal posterior distribution of the hyperparameters and, consequently, the covariance parameter between the $alr$-coordinates (Fig. \ref{fig::arabidopsis_hyperpar}).
\begin{figure}
    \includegraphics[width=0.9\textwidth]{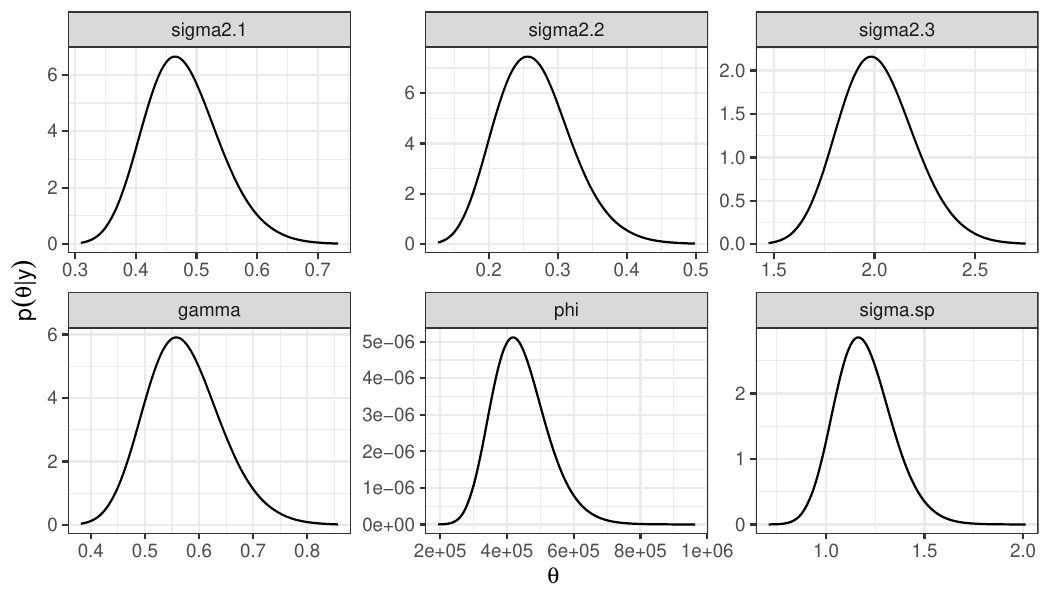}
    \caption{Marginal posterior distribution for the hyperparameters of the model}
   \label{fig::arabidopsis_hyperpar}
\end{figure}

\section{Conclusions and future work} \label{sec::conclusions}
CoDa are becoming more and more common, especially in the context of genomics, and require increasingly powerful computational tools to be analysed. Thus, we believe that finding a way to include a likelihood that can deal with CoDa in the context of LGMs can facilitate inference and predictions. That is why in this manuscript, we have introduced a different way to make inference on Bayesian CoDa analysis. By doing so, we attempt to include it in the context of LGMs, thereby making the range of possibilities that \textbf{R-INLA} offers available to the logistic-normal distribution with Dirichlet covariance likelihood. 

The main idea underlying the proposed method is to approximate the multivariate likelihoods with univariate ones sharing an independent random effect that can be fitted by \textbf{R-INLA}, in particular, Gaussian likelihoods. This idea is similar to the one proposed for modelling Multinomial likelihood in \textbf{R-INLA}, where using the Poisson trick \citep{baker1994} to reparameterise the model we need to fit independent Poisson observations, or the one proposed in \citep{martinez-minaya2023} to approximate Dirichlet likelihoods using conditionally independent Gaussians. \citet{simpson2016} also used a similar strategy, constructing a Poisson approximation to the true log-Gaussian Cox process likelihood and making it possible to carry out inference on a regular lattice over the observation window by counting the number of points in each cell. But this work does not intend to be a substitute for the \textbf{dirinla} package \citep{martinez-minaya2023} or for the Bayesian $ilr$ approach \citep{mota2022}: it is simply a viable alternative when dealing with CoDa that allows the estimation and prediction of very complex models in the context of CoDa. Furthermore, functions are provided for the computation of DIC and WAIC within the framework of \textbf{R-INLA}, accompanied by the definition of the CPO for CoDa.

We have reported an example in the field of Ecology, showing the potential of \textbf{R-INLA} when continuous spatial effects can be added in the linear predictor. We have exploited the options that \textbf{R-INLA} has available using tools in the context of multiple likelihoods, such as \texttt{copy} or \texttt{replicate} \citep{gomez-rubio2020}. With them, our aim was to show practitioners the number of models that can be fitted in this context. Although here we have focused mainly on spatial processes, this tool can be easily applied in other contexts: temporal, spatiotemporal, etc., as long as we exprees the model in the context of LGMs.

\backmatter

\bmhead{Supplementary information}
\textbf{Code}: The functions are stored in a R-package call INLAComp, it is on \url{https://github.com/jmartinez-minaya/INLAcomp}. The results shown in the paper are stored in \url{https://jmartinez-minaya.github.io/supplementary.html}

\bmhead{Acknowledgments}
Joaqu\'in Mart\'inez-Minaya gratefully acknowledges the Ministry of Science, Innovation and Universities (Spain) for research project PID2020-115882RB-I00.

\section*{Declarations}
Not applicable

\begin{appendices}

\section{CPO computation in R-INLA}\label{sec::appendix_cpo}
To verify that the CPO is not affected when fitting the model, it is enough to simplify the problem to the calculation of the posterior predictive distribution for the following two models:

MODEL I:
\begin{eqnarray}
    y_i & \sim & \mathcal{N}(\mu_i, \sigma^2_y)\,, \ i = 1, \ldots N\,,  \nonumber\\
    \mu_i & = & \beta_0 \,, \\
    \beta_0 & \sim & \mathcal{N}(0, \sigma_0^2)\,,
    \label{eq:modelI} \nonumber \, 
\end{eqnarray}

MODEL II:
\begin{eqnarray}
    y_i & \sim & \mathcal{N}(\mu_i, \sigma^2_{\boldsymbol{\epsilon}})\,,  \nonumber\\
    \mu_i & = & \beta_0 + \omega_i \,,\ \\
    \beta_0 & \sim & \mathcal{N}(0, \sigma_0^2) \nonumber \,, \\
    \omega_i & \sim & \mathcal{N}(0, \sigma_{\omega}^2) \,.
    \label{eq:modelII} \nonumber \, 
\end{eqnarray}

Let assume for simplicity that $\sigma^2_y$, $\sigma^2_{\boldsymbol{\epsilon}}$, $\sigma^2_{\omega}$ and $\sigma_0^2$ are fixed numbers. Both models are equivalent, and $\sigma^2_{\boldsymbol{\epsilon}} + \sigma^2_{\omega} = \sigma^2_{y}$. However, as we have pointed out, in \textbf{R-INLA}, an additional process is required for computing DIC and WAIC. Nevertheless, it is not necessary for CPO, let's see why. 

\begin{proposition}
Let $y_i$, $i = 1, \ldots, N$ independent realisations of a Gaussian distribution with mean $\mu_i$ and variance $\sigma^2_{y}$. The expressions \eqref{eq:modelI} and \eqref{eq:modelII} reflect two different ways of representing the process, although both models are equivalent. Thus, the CPO of the Model I and Model II are the same.
\end{proposition}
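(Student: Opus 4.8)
The plan is to reduce the claim to a single statement about the marginal law of the data vector $\ve{y}=(y_1,\ldots,y_N)$ once all latent quantities have been integrated out. By definition $\text{CPO}_i = p(y_i \mid \ve{y}_{-i})$, and since
\[
p(y_i \mid \ve{y}_{-i}) \;=\; \frac{p(\ve{y})}{\int p(\ve{y})\, dy_i}\,,
\]
the CPO is a deterministic functional of the joint marginal density $p(\ve{y})$ alone, where $p(\ve{y})=\int p(\ve{y}\mid\text{latent})\,p(\text{latent})\,d(\text{latent})$ is obtained by marginalising over the whole latent field (the hyperparameters being fixed by assumption). Hence it suffices to show that Model I and Model II induce \emph{exactly the same} marginal distribution for $\ve{y}$; equality of every $\text{CPO}_i$ then follows at once.

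First I would marginalise Model I. Writing $y_i = \beta_0 + \epsilon_i$ with $\epsilon_i \sim \mathcal{N}(0,\sigma_y^2)$ mutually independent and $\beta_0 \sim \mathcal{N}(0,\sigma_0^2)$, the vector $\ve{y}$ is a linear image of independent Gaussians, hence jointly Gaussian with mean $\ve{0}$ and covariance
\[
\boldsymbol{\Sigma}_{\mathrm{I}} \;=\; \sigma_0^2\,\boldsymbol{1}\boldsymbol{1}^{\top} + \sigma_y^2\,\boldsymbol{I}\,,
\]
with $\boldsymbol{1}$ the all-ones vector and $\boldsymbol{I}$ the identity. Next I would repeat the computation for Model II, where $y_i = \beta_0 + \omega_i + \epsilon_i$ with all three Gaussian components mutually independent. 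Again $\ve{y}$ is jointly Gaussian with mean $\ve{0}$, and because $\omega_i$ and $\epsilon_i$ are observation-specific while $\beta_0$ is shared across all units, its covariance is
\[
\boldsymbol{\Sigma}_{\mathrm{II}} \;=\; \sigma_0^2\,\boldsymbol{1}\boldsymbol{1}^{\top} + \bigl(\sigma_{\omega}^2 + \sigma_{\boldsymbol{\epsilon}}^2\bigr)\,\boldsymbol{I}\,.
\]

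The decisive step is then to invoke the stated variance-matching constraint $\sigma_{\boldsymbol{\epsilon}}^2 + \sigma_{\omega}^2 = \sigma_y^2$, which immediately yields $\boldsymbol{\Sigma}_{\mathrm{II}} = \boldsymbol{\Sigma}_{\mathrm{I}}$. Since both marginals are zero-mean Gaussians with identical covariance, $p_{\mathrm{I}}(\ve{y}) = p_{\mathrm{II}}(\ve{y})$, and by the functional characterisation of the first paragraph $\text{CPO}_i^{(\mathrm{I})} = \text{CPO}_i^{(\mathrm{II})}$ for every $i$. I do not expect a genuine technical obstacle: the only point requiring care is the reduction in the first step, namely the argument that the CPO depends on the model solely through $p(\ve{y})$ and not through the particular hierarchical representation chosen. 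Once that observation is made precise, the conclusion is a one-line consequence of the closure of the Gaussian family under marginalisation together with the constraint $\sigma_{\boldsymbol{\epsilon}}^2 + \sigma_{\omega}^2 = \sigma_y^2$. This contrasts with DIC and WAIC, which depend on the conditional deviance $p(y_i \mid \text{latent})$ and therefore \emph{do} distinguish the two representations—explaining why those quantities need the extra post-processing described in the main text whereas the CPO does not.
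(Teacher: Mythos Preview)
Your argument is correct, and it takes a genuinely different route from the paper's own proof. The paper proceeds by explicitly computing, for each model separately, the posterior distribution of the latent parameters given $\ve{y}$ (using the closed-form Bayesian linear-model formulae with precision matrices), and then integrating against the conditional likelihood to obtain the posterior predictive density of a held-out observation; it then checks by direct calculation that the resulting Gaussian mean and variance coincide across the two representations. You bypass all of that by observing at the outset that $\text{CPO}_i$ is a functional of the marginal law $p(\ve{y})$ alone, and then comparing the two marginals in one line via their covariance matrices $\sigma_0^2\,\boldsymbol{1}\boldsymbol{1}^{\top}+\sigma_y^2\boldsymbol{I}$ versus $\sigma_0^2\,\boldsymbol{1}\boldsymbol{1}^{\top}+(\sigma_{\omega}^2+\sigma_{\boldsymbol{\epsilon}}^2)\boldsymbol{I}$. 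Your route is shorter, avoids the matrix inversions the paper carries out, and makes transparent \emph{why} the result holds (closure of Gaussians under marginalisation plus the variance-matching constraint); it also cleanly isolates the contrast with DIC and WAIC, which depend on the conditional deviance and hence on the hierarchical parametrisation. The paper's route, by contrast, yields the explicit posterior of $\beta_0$ and of $(\beta_0,\boldsymbol{\omega})$ as a by-product, which is not needed for the proposition but may be informative in its own right.
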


\begin{proof}
For proving that both CPOs are equal, it is enough to show that the posterior predictive distribution of both models is the same.

We start with a general linear mixed model following the expression in Equation \eqref{eq::LGM}
\begin{equation}
\boldsymbol{y} = \boldsymbol{X} \boldsymbol{\beta} + \boldsymbol{A_{\omega}} \boldsymbol{\omega} + \boldsymbol{\epsilon} \,,
\end{equation}
being $\boldsymbol{X}$ and $\boldsymbol{\boldsymbol{A_{\omega}}}$ design matrices, $\boldsymbol{\beta}$, a vector of fixed effects which follows a multivariate Gaussian prior distribution with mean $\boldsymbol{m}$ and covariance matrix $\boldsymbol{M}$, and $\boldsymbol{\omega}$ a vector of random effects which follows a multivariate Gaussian prior distribution with mean $\boldsymbol{0}$ and covariance matrix $\boldsymbol{G}$. The covariance matrices for $\boldsymbol{\omega}$ and $\boldsymbol{\boldsymbol{\epsilon}}$ are assumed to be non singular, and positive definite, and $\boldsymbol{\omega}$ and $\boldsymbol{\boldsymbol{\epsilon}}$ are independent.

Following \citet{fahrmeir2013}, if the covariance structures $\boldsymbol{G}$ and $\boldsymbol{R}$ are known, and $\boldsymbol{C} = (\boldsymbol{X}, \boldsymbol{U})$, $\boldsymbol{B} =
\begin{pmatrix}
\boldsymbol{M}^{-1} & \boldsymbol{0} \\
\boldsymbol{0} & \boldsymbol{G}^{-1}
\end{pmatrix} \,,
\boldsymbol{\tilde{m}} = \begin{pmatrix}
\boldsymbol{M}^{-1} \boldsymbol{m}  \\
\boldsymbol{0}
\end{pmatrix} \,,$
then the posterior distribution is multivariate Gaussian with the the following expectation and Covariance matrix.
\begin{eqnarray}
E((\boldsymbol{\beta}, \boldsymbol{\gamma}) \mid \boldsymbol{y} ) & = & (\boldsymbol{C}' \boldsymbol{R}^{-1} \boldsymbol{C} + \boldsymbol{B})^{-1} \left (\boldsymbol{\tilde{m}} + \boldsymbol{C}' \boldsymbol{R}^{-1} \boldsymbol{y} \right) \\  
\boldsymbol{Cov}((\boldsymbol{\beta}, \boldsymbol{\gamma}) \mid \boldsymbol{y}) & = & \left(\boldsymbol{C}' \boldsymbol{R}^{-1} \boldsymbol{C} + \boldsymbol{B} \right)^{-1}    
\end{eqnarray}

\textbf{Model I}:

For model I, depicted in Equation \eqref{eq:modelI}, $\boldsymbol{R}$ is a diagonal matrix in  $\mathbb{R}^{n \times n}$ whose elements in the diagonal are $\sigma^2_y$. As we do not have random effects $\boldsymbol{C} = \boldsymbol{X}$, which is a column matrix in $\mathbb{R}^{N \times 1}$ whose elements are 1. $\boldsymbol{\tilde{m}}$ is a column matrix in $\mathbb{R}^{1 \times 1}$ whose elements are 0, and finally $\boldsymbol{B} = \boldsymbol{M}^{-1}$ in $\mathbb{R}^{1 \times 1}$, whose element is $\frac{1}{\sigma_0^2}$. Then, $\beta_0 \mid \boldsymbol{y} \sim \mathcal{N}(\mu_{\beta_0}, \sigma^2_{\beta_0})$, being:
\begin{eqnarray}
\mu_{\beta_0} = \boldsymbol{E}(\beta_0 \mid \boldsymbol{y} ) & = & 
\frac{1}{\frac{N}{\sigma^2_y} + \frac{1}{\sigma_0^2}} \frac{N \overline{\boldsymbol{y}}}{\sigma^2_y} \\  
\sigma^2_{\beta_0} = \boldsymbol{Var}(\beta_0 \mid \boldsymbol{y}) & = & \frac{1}{\frac{N}{\sigma^2_y} + \frac{1}{\sigma_0^2}}    
\end{eqnarray}
The posterior predictive distribution for a new observation $y^{\prime}$
\begin{equation}
  p\left(y^{\prime} \mid \boldsymbol{y} \right) = \int p\left(y^{\prime} \mid \beta_0 \right) \cdot p\left(\beta_0 \mid \boldsymbol{y} \right) d \beta_0   
\end{equation}
is Gaussian with mean $\mu_{\beta_0}$ and variance $\sigma_{\beta_0}^2 + \sigma_{y}^2$.

\textbf{Model II}:

Regarding model II, depicted in Equation \eqref{eq:modelII}, $\boldsymbol{R}$ is also diagonal matrix in $\mathbb{R}^{N \times N}$ whose elements in the diagonal are $\sigma^2_{\boldsymbol{\epsilon}}$. $\boldsymbol{V}$, again is a column matrix in $\mathbb{R}^{N \times 1}$ whose elements are 1, and $\boldsymbol{U}$ is an identity matrix in $\mathbb{R}^{N \times 1}$. Then $\boldsymbol{C}= (\boldsymbol{V}, \boldsymbol{U})$. $\boldsymbol{\tilde{m}}$ is a column matrix in $\mathbb{R}^{(N + 1) \times 1}$ whose elements are 0. Finally $\boldsymbol{B}$ is a diagonal matrix in $\mathbb{R}^{(N+1) \times (N+1)}$, whose first element of the diagonal is $\frac{1}{\sigma_0^2}$ and the rest are $\frac{1}{\sigma_{\boldsymbol{\omega}}^2}$. 

Computing the joint posterior distribution for $\beta_0, \boldsymbol{\omega}$, we obtain that it follows a multivariate Gaussian with:
\begin{eqnarray}
\boldsymbol{E}(\beta_0, \boldsymbol{\omega} \mid \boldsymbol{y} ) & = & 
\boldsymbol{Cov}(\beta_0, \boldsymbol{\omega} \mid \boldsymbol{y}) \begin{pmatrix}
   \frac{1}{\sigma^2_{\boldsymbol{\epsilon}}} & \frac{1}{\sigma^2_{\boldsymbol{\epsilon}}} & \frac{1}{\sigma^2_{\boldsymbol{\epsilon}}} & \ldots & \frac{1}{\sigma^2_{\boldsymbol{\epsilon}}} \\
   
   \frac{1}{\sigma^2_{\boldsymbol{\epsilon}}}  & 0 & 0 & \ldots & 0 \\
   
   0 & \frac{1}{\sigma^2_{\boldsymbol{\epsilon}}}  &  0 & \ldots & 0 \\

    \vdots & \vdots & \vdots & \vdots & \vdots \\
    
    0 & 0 & 0 & \ldots &  \frac{1}{\sigma_{\boldsymbol{\boldsymbol{\epsilon}}}^2} \\
\end{pmatrix} \boldsymbol{y}  \\  
\boldsymbol{Cov}(\beta_0, \boldsymbol{\omega} \mid \boldsymbol{y}) & = & \begin{pmatrix}
   \frac{N}{\sigma^2_{\boldsymbol{\epsilon}}} + \frac{1}{\sigma_0^2} & \frac{1}{\sigma^2_{\boldsymbol{\epsilon}}} & \frac{1}{\sigma^2_{\boldsymbol{\epsilon}}} & \ldots & \frac{1}{\sigma^2_{\boldsymbol{\epsilon}}} \\
   
   \frac{1}{\sigma^2_{\boldsymbol{\epsilon}}}  & \frac{1}{\sigma^2_{\boldsymbol{\epsilon}}} + \frac{1}{\sigma_{\boldsymbol{\omega}}^2} & 0 & \ldots & 0 \\

    \vdots & \vdots & \vdots & \vdots & \vdots \\

   \frac{1}{\sigma^2_{\boldsymbol{\epsilon}}} & 0 & 0 & \ldots & \frac{1}{\sigma^2_{\boldsymbol{\epsilon}}} + \frac{1}{\sigma_{\boldsymbol{\omega}}^2} \\
\end{pmatrix}^{-1}    
\end{eqnarray}

The posterior predictive distribution for a new observation $y^{\prime}$ with mean $\mu^{\prime}$ can be computed as:
\begin{equation}
  p\left(y^{\prime} \mid \boldsymbol{y} \right) = \int p\left(y^{\prime} \mid \mu^{\prime} \right) \cdot p\left(\mu^{\prime} \mid \boldsymbol{y} \right) d \mu^{\prime}\,,
  \label{eq:post_pred_modII}
\end{equation}
being $p(\mu^{\prime} \mid \boldsymbol{y}) = \int p\left(\mu^{\prime} \mid \beta_0, \sigma^2_{\boldsymbol{\omega}} \right) \cdot p\left(\beta_0 \mid \boldsymbol{y} \right) d \beta_0$.
Clearly, it is Gaussian with mean $\mu_{\beta_0}$ and variance $\sigma^2_{\beta_0} + \sigma_{\boldsymbol{\omega}}^2$. Note that $\sigma^2_{\beta_0}$ is the variance of the posterior marginal of $\beta_0$. This corresponds to the first element of $\boldsymbol{Cov}(\beta_0, \boldsymbol{\omega} \mid \boldsymbol{y})$, which is $\frac{1}{\frac{N}{\sigma^2_{\boldsymbol{\omega}}} + \frac{1}{\sigma_0^2}}$. Something similar happens with $\mu_{\beta_0}$, the first element of the resulting matrix $\boldsymbol{E}(\beta_0, \boldsymbol{\omega} \mid \boldsymbol{y} )$, which is $\frac{1}{\frac{N}{\sigma^2_{\boldsymbol{\epsilon}} + \sigma^2_{\boldsymbol{\omega}}} + \frac{1}{\sigma_0^2}} \frac{N \overline{\boldsymbol{y}}}{\sigma^2_{\boldsymbol{\epsilon}} + \sigma^2_{\boldsymbol{\omega}}}$ 

Finally, and coming back to Equation \eqref{eq:post_pred_modII}, we obtain that the posterior predictive distribution of $y^{\prime} \mid \boldsymbol{y}$ is Gaussian, with mean $\mu_{\beta_0}$ and variance $\sigma^2_{\beta_0} + \sigma^2_{\boldsymbol{\omega}} + \sigma^2_{\epsilon}$.

As a consequence, the two models have the same posterior predictive distributions, and then CPO is equal for both.
\end{proof}

\end{appendices}
\bibliography{compositional}

\end{document}